\begin{document}
\newcommand{\pist}{\pi^{ST}}
\newcommand{\ita}{ITS}
\newcommand{\St}{\pr^{-1}}
\newcommand{\Ss}{S}
\newcommand{\Loc}{\V_{O \cup H}}
\newcommand{\Ls}{Loc_\mathcal{S}}
\newcommand{\st}{st}
\newcommand{\state}{state}
\newcommand{\stof}{state_{of}}
\newcommand{\fstof}{\overline{state}_{of}}
\newcommand{\R}{\mathcal{R}}
\newcommand{\Rs}{\mathcal{R}^{sgn}}
\newcommand{\Rm}{\mathcal{R}^-}
\newcommand{\Rp}{\mathcal{R}^+}
\newcommand{\Ropt}{{\mathcal{R}^{\theta}}}
\newcommand{\Roptm}{{\Ropt}^-}
\newcommand{\Roptp}{{\Ropt}^+}
\newcommand{\Ropts}{{\Ropt}^{sgn}}
\newcommand{\Told}{T}
\newcommand{\Td}{\mathcal{R}^\mathcal{F}}
\newcommand{\M}{\mathcal{M}}
\newcommand{\Pp}{\M}
\newcommand{\I}{\mathcal{I}}
\newcommand{\T}{\mathcal{T}}
\newcommand{\W}{\mathcal{W}}
\newcommand{\A}{\mathcal{A}}
\newcommand{\V}{V}
\newcommand{\Vi}{\V^I}
\newcommand{\Vo}{\V^O}
\newcommand{\Vh}{\V^H}
\newcommand{\minf}{\models}
\newcommand{\mt}{\models_{t}}
\newcommand{\mts}{\models_{t^{sgn}}}
\newcommand{\mtn}{\models_{t^-}}
\newcommand{\mtp}{\models_{t^+}}

\newcommand{\notmt}{\nvDash_{t}}
\newcommand{\notmtn}{\nvDash_{t^-}}
\newcommand{\notmtp}{\nvDash_{t^+}}

\newcommand{\pred}{Pred}
\newcommand{\fun}{func}
\newcommand{\term}{u}
\newcommand{\E}{E}
\newcommand{\Ei}{\E_I}
\newcommand{\Eo}{\E_O}
\newcommand{\Eh}{\E_H}
\newcommand{\unt}{U}
\newcommand{\rel}{R}
\newcommand{\z}{Z}
\newcommand{\y}{Y}
\newcommand{\since}{S}
\newcommand{\always}{G}
\newcommand{\future}{F}
\newcommand{\hist}{H}
\newcommand{\nxt}{X}
\newcommand{\tr}{Tr}
\newcommand{\trp}{\tr^+}
\newcommand{\trm}{\tr^-}
\newcommand{\weak}{\mathcal{W}2\mathcal{S}}
\newcommand{\trs}{\tr^{sgn}}
\newcommand{\run}{run}
\newcommand{\e}{end}
\newcommand{\F}{\mathcal{F}}
\newcommand{\SF}{\mathcal{SF}}
\newcommand{\La}{\mathcal{L}}
\newcommand{\Laf}{\mathcal{L}^{<\omega}}
\newcommand{\Lainf}{\mathcal{L}^{\omega}}
\newcommand{\atn}{@\tilde{F}}
\newcommand{\atnw}{@F}
\newcommand{\atl}{@\tilde{P}}
\newcommand{\fcond}{\psi_{cond}}
\newcommand{\Fcond}{\Psi_{cond}}
\newcommand{\Comp}{\Pp}
\newcommand{\Prop}{\varphi}
\newcommand{\struct}{M}

\newcommand{\pr}{Pr}
\newcommand{\map}{map}
\newcommand{\trr}{TrR}
\newcommand{\trrufa}{TrRuFA}
\newcommand{\trrf}{TrR+F}

\newcommand{\size}[1]{|#1|}
\newcommand{\ite}{ite}

\newcommand{\KWD}[1]{\ensuremath{\textit{#1}}\xspace}
\newcommand{\Tail}{\KWD{Tail}}
\theoremstyle{plain}\newtheorem{ifr}[thm]{Inference}
\theoremstyle{plain}\newtheorem{example}[thm]{Example}

\usetikzlibrary{shapes,arrows}
\usetikzlibrary{arrows.meta}

\tikzstyle{block} = [rectangle, draw, fill=blue!20, 
    text width=3em, text centered, rounded corners, minimum height=1em]
\tikzstyle{line} = [draw, -latex']
\tikzstyle{cloud} = [draw, ellipse,fill=red!20, node distance=1cm,
    minimum height=1em]

\usetikzlibrary{calc, 3d, mindmap, trees, arrows,decorations.pathreplacing,backgrounds,positioning,fit,petri, patterns, shapes, quotes}

\title[Async Composition of LTL Properties over
INF and Finite Traces]{Asynchronous Composition of LTL Properties over
Infinite and Finite Traces} 

\author[A.~Bombardelli]{Alberto Bombardelli \lmcsorcid{0000-0003-3385-3205}
}[a,b]
\author[S.~Tonetta]{Stefano Tonetta \lmcsorcid{0000-0001-9091-7899}}[a]

\address{Fondazione Bruno Kessler, via Sommarive 18, Trento Italy 380123}
\email{abombardelli@fbk.eu, tonettas@fbk.eu}

\address{University of Trento, via Sommarive 9, Trento Italy 380123}
\email{alberto.bombardell-1@unitn.it}

\begin{abstract}
The verification of asynchronous software components poses significant
challenges due to the way components interleave and exchange
input/output data concurrently. Compositional strategies aim to
address this by separating the task of verifying individual components
on local properties from the task of combining them to achieve global
properties. This paper concentrates on employing symbolic model
checking techniques to verify properties specified in Linear-time
Temporal Logic (LTL) on asynchronous software components that interact
through data ports. Unlike event-based composition, local properties
can now impose constraints on input from other components,
increasing the complexity of their composition. We consider both the
standard semantics over infinite traces as well as the truncated
semantics over finite traces to allow scheduling components only
finitely many times.

We propose a novel LTL rewriting approach, which converts a local
property into a global one while considering the interleaving of
infinite or finite execution traces of components. We prove the
semantic equivalence of local properties and their rewritten version
projected on the local symbols. The rewriting is
also optimized to reduce formula size and to leave it unchanged when
the temporal property is stutter invariant. These methods have been
integrated into the OCRA tool,  as part of the contract
refinement verification suite. Finally, the different composition
approaches were compared through an experimental evaluation that
covers various types of specifications.
\end{abstract}

\maketitle

\section{Introduction}

Model checking asynchronous software poses significant challenges 
%
%
due to the non-determini\-stic
interleaving of components and concurrent access to shared
variables.
Compositional techniques are often used to tackle scalability issues.
The idea of this approach is to decouple the problem of verifying local properties
specified over the component interfaces from the problem of composing them to ensure
some global property.

For example, in \cite{RBH+01} is described the following compositional reasoning.
Given some local component ($\M_1, \dots, \M_n$), 
some local properties of these components ($\varphi_1, \dots, \varphi_n$),
a global property ($\varphi$), a notion of composition for the components ($\gamma_S$)
and a notion of composition for the properties ($\gamma_P$); if the local components satisfy
the local properties and the composition of local properties entails the global
properties, the local component composition satisfies the global property. 
While in the synchronous setting, the composition is simply giving by
a conjunction, the asynchronous composition of local
temporal properties may be tricky when considering software components
communicating through data ports.

In this paper, we define the asynchronous composition of LTL~\cite{ltl} properties local to
components; which means that the local properties reason only over the part of the
execution of the local component, e.g., when a local property refers to the
\textit{next} state, the composition will consider the next state
along the local run of the component.
The communication between local components is achieved
with I/O data ports while their execution is controlled by scheduling constraints. 
Due to uncontrollable input changes,
the composition must take into account whether or not the local component is running
even if the formula does not contain ``next''\footnote{Usually
    when an LTL formula does not contain next, it is \textit{stutter invariant}. 
    In such cases, the asynchrony would not interfere with the formula.}.
Another important factor to consider is the
possibly finite execution of local components. If for instance, the scheduler is
not fair, the unlucky local component might be scheduled only for a finite amount
of time. This \textit{``possibly finite''} local semantics also allows for a natural
way to represent permanent faults in asynchronous systems (e.g. a local component
is not scheduled anymore means that it crashed).

To represent the \textit{``possibly finite''} local semantics,
we use the truncated weak semantics defined in \cite{Eisner2003ReasoningWT}
to represent local properties.
The idea is that the local formula can be interpreted to both finite and infinite trace, and the formula semantics does not force the system to execute
the local component. For completeness, we define the composition
in a way that soundly supports the finite execution of the composed system as
well; therefore, the compositional reasoning applies to hierarchical systems
too.

Our composition approach is based on a syntactical rewriting $\R^*_c$ of the local temporal property
to a property of the composite model. The rewriting is then conjoined with a constraint
ensuring that output variables do not change when the component is not running ($\fcond$).
The property composition $\gamma_P$
is in the following form: $\fcond \wedge \bigwedge_{1 \le i \le n} \R^*_i(\varphi_i)$.
We also provide an optimized version of the rewriting that works when
all components run infinitely often, thus without possible truncation
of the local traces.

The proposed approach has been implemented inside OCRA\cite{ocra}, which allows
a rich extension of LTL and uses a state-of-the-art model checking
algorithm implemented in nuXmv~\cite{nuxmv} as back-ends to check
satisfiability.

We evaluated our approach on models representing compositions
of pattern formulas and real models from the automotive domains~\cite{CimattiTACAS23}.
We show both the qualitative differences in results between the two semantics
and the impact of the optimization when dealing with local infinite executions.

\subsection{Motivating examples}
We propose two examples to motivate our work. The first
model is a toy example representing a system that tries to send a value to a network
while the second is a real model coming from the automotive domain. Both models
can be naturally represented through possibly finite scheduling of
local components.

\subsubsection{Sender}
\label{sec:simple_ex}
We now model a three-component system that represents a system that receives
a message and tries to send it through a network. The component either successfully delivers the message or fails to send it and logs the message.
The example is represented in Figure \ref{fig:ex}.
This example is composed of three components:
Component $c1$ receives a message $rec_1$ and an input $in_1$ and tries
to send the value to component $c2$. If $c1$ is eventually able to send $send_2$
message through the network, then $c2$ will run and will output the original input.
The network guarantees that eventually, $c1$ will be able to send the message to $c2$;
however, if $c1$ runs only finitely many times, $c3$ at some point will report a failure.
The global property states that if an input message is received, it is either eventually
delivered as output or an error occurs.
If $c1$ runs infinitely often, the global property is satisfied without
the need for $c3$.

\begin{figure}
  \begin{subfigure}{\textwidth}
    \centering
      \begin{tikzpicture}[>=latex,font=\sffamily, node distance=5em, minimum width=4.5em,
        minimum height=3em,scale=.7]
        \node[rectangle, draw] (c1) at (0, 0) {$c1$};
        \node[rectangle, draw, right = 3.5em of c1] (c2) {$c2$};
        \node[below of =c2] (tmp) {};
        \node[rectangle, draw, left= -2em of tmp] (c3) {$c3$};

        \node[left of=c1] (rec1) {\color{red}$rec_1$};
        \node[below = -15pt of rec1] (in1) {\color{red}$in_1$};
        \node[right of=c1, yshift=10pt, xshift=-18pt] (out1) {\color{blue}$out_1$};
        \node[left of=c2, yshift=10pt, xshift=+22pt] (in2) {\color{red}$in_2$};
        \node[right of=c3, yshift=10pt] (outf) {\color{blue}$out_f$};
        \node[below right of=c3, yshift=20pt] (f) {\color{blue}$fail$};
        \node[below right of=c2, yshift=20pt] (f) {\color{blue}$send$};
        \node[below of=c1, yshift=25pt, xshift=10pt] (try1) {\color{blue}$try_1$};
        \node[left of=c3, yshift=10pt, xshift=15pt] (try) {\color{red}$try$};
        \node[above right of=c1, yshift=5pt, xshift=-10pt] (send1) {\color{blue}$send_1$};
        \node[above left of=c2, yshift=5pt, xshift=10pt] (rec2) {\color{red}$rec_2$};
        \node[right of=c2, yshift=5pt, xshift=-10pt] (o) {\color{blue}$out$};
        \node[above of=c3, yshift=-20pt, xshift=-10pt] (i3) {\color{red}$in_3$};
        \draw[->] (c1) -- (c2);
        \draw[->] (c1.north) -- ++(0, 1) -| (c2);
        \draw[->] (c1) -| (c3.north);
        \draw[->] (c1) |- (c3);
        \draw[<-] (c1) -- ++(-2, 0);
        \draw[<-] (c1.south west) -- ++(-1.5, -1);
        \draw[->] (c2) -- ++(2, 0);
        \draw[->] (c2.south) -- ++(0, -0.5) -- ++(2, 0);
        \draw[->] (c3) -- ++(3, 0);
        \draw[->] (c3.south) -- ++(0, -0.5) -- ++(3, 0);
      \end{tikzpicture}
  \end{subfigure}
  \begin{subfigure}{\textwidth}
    \begin{align*}
      \varphi_{c1} :=&\always (rec_1 \rightarrow out_1' = in_1 \wedge
        \nxt ((try_1\wedge out_1'=out_1) \unt send_1))\\
      \varphi_{c2} :=& \always (rec_2 \rightarrow  out2' = in2 \wedge \nxt send_2)\\
      \varphi_{c3} :=& \always (try \rightarrow out_f' =in_3 \wedge \nxt fail)\\
      \varphi :=&\always ((rec_1 \wedge in_1=v) \rightarrow\future (send \wedge out=v \vee fail \wedge out_f=v))\\
      \alpha :=&\always (in_1 \rightarrow run_1) \wedge\always (send_1 \rightarrow run_2) \wedge
                 \always(\hist^{\le p} try_1 \rightarrow run_3)
    \end{align*}
  \end{subfigure}
  \caption{Figure representing the sender model. The colour red represents input variables while
  the colour blue represents the output variables.}
  \label{fig:ex}
\end{figure}

\subsubsection{Automotive compositional contract}
\label{sec:autosar_ex}
Another interesting example comes from the automotive domain. In \cite{CimattiTACAS23}, 
the EVA framework was proposed for the compositional verification
of AUTOSAR components. That work used the rewriting technique we proposed in \cite{nfm22}
to verify the correct refinement of contracts defined as a pair of LTL properties.

Due to the complexity of the model, we omit a full description of the system,
the specifications and the properties. We focus on a specific requirement of the
system that states that the system shall brake when the Autonomous Emergency Braking
module gets activated. A simplified version of the specification
is defined by the following LTL formula:
$$\always (\nxt (aeb\_breaking.status \neq 0) \rightarrow \future^{\le 2} Brake\_\_In\_\_BrakeActuator \neq 0)$$
The specification is entailed by a brake actuator component that is composed of
the actual actuator (BrakeActuator\#BA\_Actuator) and a watchdog (BrakeActuator\#BA\_Watchdog).
The actuator is scheduled every time a signal is received in input while the watchdog
is scheduled periodically. By reasoning over finite executions of components,
it is possible to verify whether or not the global specification is valid even if at some
point the Actuator stop working.
We omit the detailed structure and specification of the sub-components (actuator and
watchdog), for a complete view please refer to~\cite{CimattiTACAS23} or to the
experimental evaluation.
\subsection{Overall contribution}
The main contribution of this paper is the definition of a rewriting-based technique
to verify compositional asynchronous systems in a general way. Furthermore, we
provide an additional optimized rewriting technique to cover the case in which
local components are assumed to run infinitely often.
The main advantages of our compositional approach are the following:
\begin{itemize}
  \item It supports asynchronous communication between data ports.
  \item It supports generic scheduling constraints expressible through LTL formulas.
  \item It supports both finite and infinite executions of local components making
    it a suitable approach for safety assessment as well.
\end{itemize}
\noindent 
This work is an extension of the conference paper \cite{nfm22}. The work has been
extended with the following contributions:
\begin{itemize}
  \item A weak semantics for the logic with two decision procedures for its verification.
  \item A new definition of asynchronous composition of Interface Symbolic Transition Systems
    that deals with possibly finite execution of components.
  \item A new rewriting that generalizes the previous one for possibly finite systems;
    the rewriting introduced in~\cite{nfm22} is then presented as an optimized 
    version of our new rewriting when each component is scheduled infinitely often.
  \item A new experimental evaluation with new models and a set of benchmarks from
    the automotive domain.
\end{itemize}
\subsection{Outline}
The rest of the paper is organized as follows: in
Sec.~\ref{sec:rw}, we compare the proposed solution with related works;
in Sec.~\ref{sec:logic}, we define our logic syntax, semantics and verification;
in Sec.~\ref{sec:problem}, we formalize the problem; in
Sec.~\ref{sec:rewr}, we define the rewriting approach, its basic version, its
complete version and an optimized variation that is suited for infinite executions
only; in Sec.~\ref{sec:perfeval}, we report on the
experimental evaluation; finally, in
Sec.~\ref{sec:conclusions}, we draw the conclusions and some
directions for future works.

\section{Related works}
\label{sec:rw}

One of the most important works on temporal logic for asynchronous systems is
Temporal Logic of Action (TLA)~\cite{tla} by Leslie Lamport, later extended 
with additional operators\cite{tla+}. TLA has been also used in a component-based
manner in \cite{comptla+}. Our formalism has similarities with TLA. We use
a (quantifier-free) first-order version of LTL~\cite{MannaPnueli92} with ``next''
function to specify the succession of actions of a program. TLA natively supports
the notion of stuttering for composing asynchronous programs so that the
composition is simply obtained by conjoining the specifications.
We focus instead on local properties that are specified independently of how
the program is composed so that ``next'' and input/output data refer only to the local
execution.
Another substantial difference with TLA is that our formalism also supports a finite semantics of LTL,
permitting reasoning over finite traces.

As for propositional LTL, the composition of specifications is studied
in various papers on assume-guarantee reasoning (see, e.g.,
\cite{DBLP:conf/charme/McMillan99a,DBLP:conf/spin/PasareanuDH99,AGltl,cref})
for both synchronous and asynchronous composition.
In the case of asynchronous systems, most works focus on fragments of
LTL without the next operator, where formulas are always stutter
invariant. Other studies investigated how to tackle
down state-space explosion for that scenario usually employing techniques such as
partial order reduction \cite{oldrewr}. However, our work covers a
more general setting, where the presence of input variables makes
formulas non-stutter-invariant.

Similar to our work,
\cite{oldrewr} considers a rewriting for LTL with events to map local
properties into global ones with stuttering. However, contrary to this
paper, it does not consider input variables (nor first-order
extension) and assume that every variable does not change during
stuttering, resulting in a simpler rewriting. In \cite{EisnerFHMC03},
a temporal clock operator is introduced to express properties related
to multiple clocks and, in principle, can be used to interpret
formulas over the time points in which a component is not
stuttering. Its rewriting is indeed similar to the basic version
defined in this paper, but is limited to propositional LTL and has not
been conceived for asynchronous composition. The optimization that we
introduce to exploit the stutter invariance of subformulas results in
simpler formulas easy to be analyzed as shown in our experimental
evaluation.

The rewriting of asynchronous LTL is similar to the transformation of
asynchronous symbolic transition systems into synchronous ones
described in \cite{hydi}. That work considers connections based on
events where data are exchanged only upon synchronization (allowing
optimizations as in shallow synchronization~\cite{shallowsync}). Thus,
it does not consider components that read from input variables that
may be changed by other components. Moreover, \cite{hydi} is not able
to transform temporal logic local properties in global ones as in this
paper.

In \cite{CongLiu22}, the authors defined a technique to verify asynchronous assume/guarantee
systems in which assumptions and guarantees are defined on a safety fragment of
LTL with predicates and functions. The main differences between that work and this
one are the following. The logic considered by \cite{CongLiu22} is limited to a
safety fragment of LTL with ``globally'' and past operators; on the other hand, our work covers
full LTL with past operators with event freezing functions and finite semantics.
Another key difference is that they consider scheduling in which a component is
first dispatched and then, after some steps it completes its action; instead, our framework defines scheduling constraints with  LTL formulas
and actions are instantaneous.

Another related work is the one proposed in~\cite{MBATVA14}. Here, the authors
propose a compositional reasoning based on assume-guarantee contracts for model based safety
assessment (MBSA). They extend the model with possible faults that
are modelled as input Boolean parameter to the system. However,
contrary to our work, asynchronous composition is not considered and a
fault represents the non-satisfaction of a local property. 

In~\cite{BaumeisterCBFS21}, a related rewriting is proposed in the context of
Asynchronous Hyperproperties. That work considers an hyper-logic based on LTL
that uses a rewriting to align different traces of the same systems on ``interesting'' points.
Although for certain aspects the work is similar, it considers a very different problem.

Finally, our logic semantics is based on the works of Eisner, Fisman et al. on truncated
LTL of \cite{Eisner2003ReasoningWT}. The main differences are the following. Our logic includes past operators,
first-order predicates and functions; we consider only weak/strong semantics for
our logic while their work instead also deals with neutral semantics.

In summary, while existing works address various aspects related to our approach,
none provide a unified framework for the composition of asynchronous systems with
I/O data ports that explicitly accounts for the termination of local components.
To the best of our knowledge, this work is the first to bridge this gap, offering
a comprehensive solution that integrates these considerations into a general framework.

\section{First Order Past LTL with weak truncated semantics}
\label{sec:logic}
This section presents the syntax and the semantics of the logic used in this paper and its
verification.

The logic is an extension of LTL with event-freezing functions of \cite{Tonetta17}
that is interpreted over both finite and infinite traces. As we mentioned in the 
introduction, we are considering weak semantics for our logic that is based
on the works of Eisner and Fisman~\cite{Eisner2003ReasoningWT,FismanIsoLA18}.
The overall idea behind this logic is to have an expressive language that can reason on both infinite and truncated
executions of programs.
Finally, we can observe that for each infinite trace,
if the trace satisfies a property, then all its finite prefixes satisfy it as
well under the weak semantics.

Prior to the logic, we denote the notion of traces, which represent
finite or infinite executions of input/output
components. Therefore, traces distinguish between input
and output symbols. A local component reads the input to decide the
next state and output. Thus, the finite traces in our setting
do not contain the evaluation of input variables at the end of the trace.

\begin{defi}
  We define a trace $\pi$ as a sequence $s_0, s_1, \dots$ of assignments over a set of 
  input variables $\Vi$ and output variables $\Vo$ i.e. in which each $s_i$ is defined
  over $\Vi \cup \Vo$.
  A trace $\pi$ is denoted as finite if the sequence of assignments $s_0, s_1, \dots$
  is finite. Moreover, if a trace is finite, its last assignment is defined
  only over the symbols of $\Vo$.
  We denote the set of traces finite and infinite over  $\Vi, \Vo$ as
  $\Pi(\Vi, \Vo)$.
\end{defi}

\subsection{Syntax}
In this paper, we consider LTL~\cite{MP92} extended with past
operators~\cite{LichtensteinPZ85} (with $\since$ as ``since'' and $\y$
as ``yesterday'') as well as ``if-then-else'' ($ite$)
and ``at next'' ($\atn$), and ``at last'' ($\atl$) operators from
\cite{Tonetta17}. For simplicity, we refer to it simply as LTL.

We work in the setting of Satisfiability Modulo Theory
(SMT)~\cite{Barrett:2009hv} and LTL Modulo Theory (see, e.g., 
\cite{ltlevfr}).
First-order formulas are built as usual by proposition logic
connectives, a given set of variables $V$ and a first-order signature
$\Sigma$, and are interpreted according to a given $\Sigma$-theory
$\mathcal{T}$. We assume to be given the definition of
$\struct,\mu\models_{\mathcal{T}}\varphi$ where $\struct$ is a $\Sigma$-structure, $\mu$
is a value assignment to the variables in $V$, and $\varphi$ is a
formula. Whenever $\mathcal{T}$ and $\struct$ are clear from contexts we
omit them and simply write $\mu\models\varphi$.
\begin{defi}
  Given a signature $\Sigma$ and a set of variables $\V$, LTL formulas
  $\varphi$ are defined by the following syntax:
  $$\varphi := \top|\bot|pred(u_1,\dots,u_n)|\neg \varphi_1| \varphi_1 \vee \varphi_2|X \varphi_1|\varphi_1 U \varphi_2|Y \varphi_1| 
  \varphi_1 S \varphi_2$$
  $$ u := c|x|func(u_1,\dots, u_n)|  next(u_1) | ite(\varphi,u_1,u_2) | u_1 \atn \varphi| u_1 \atl \varphi$$
  where $c$, $func$, and $pred$ are respectively a constant, a
  function, and a predicate of the signature $\Sigma$ and $x$ is a variable in
  $\V$.
\end{defi}

Apart from $\atn$ and $\atl$, the operators are standard. $u \atn
\varphi$ represents the value of $u$ at the next point in time (excluding the current point)
in which $\varphi$ holds. Similarly, $u \atl \varphi$ represents the
value of $u$ at the last point in time in which $\varphi$ holds (excluding the current point).
Figure~\ref{fig:atn} provides an intuitive graphical view of the operator's semantics.

\begin{figure}
\begin{tikzpicture}[node distance=2cm, thick, ->]

\tikzstyle{el} = [circle, draw=black, fill=blue!20, align=center, minimum width=.2cm, minimum height=.2cm]
\tikzstyle{arrow} = [thick,->,>=stealth]

\node[el] (step1) {};
\node[left of=step1] (step0) {\dots};
\node[el, right of=step1] (step2) {};
\node[el, right of=step2] (step3) {};
\node[el, right of=step3] (step4) {};
\node[right of=step4] (stepe) {\dots};

  \node[above=7pt of step1] (u1) {\small $\varphi,u=3$};
  \node[right of=u1] (u2) {$\neg\varphi,u=3$};
  \node[right of=u2] (u3) {$\neg\varphi,u=3$};
  \node[right of=u3] (u4) {$\varphi,u=6$};

  \node[below=7pt of step1] (atn) {\small $\term \atn \varphi = 6$};
  \node[below=7pt of step4] (atn) {\small $\term \atl \varphi = 3$};
\draw[arrow] (step0) -- (step1);
\draw[arrow] (step1) -- (step2);
\draw[arrow] (step2) -- (step3);
\draw[arrow] (step3) -- (step4);
\draw[arrow] (step4) -- (stepe);

\end{tikzpicture}
  \caption{Graphical representation of $\atn$ and $\atl$.}
  \label{fig:atn}
\end{figure}

\paragraph{Notation simplification}
In the following, we assume to have a background theory such that the
symbols in $\Sigma$ are interpreted by an implicit structure $\struct$
(e.g., theory of reals, integers, etc.). We therefore omit $\struct$ to
simplify the notation, writing $\pi, i\mts \varphi$ and $\pi(i)(u)$
instead of respectively $\pi, \struct, i\mts \varphi$ and $\pi^{\struct}(i)(u)$.

Finally, we use the following standard abbreviations:
$\varphi_1\wedge\varphi_2:=\neg(\neg\varphi_1\vee\neg\varphi_2)$,
$\varphi_1 R \varphi_2 := \neg (\neg \varphi_1 U \neg \varphi_2)$ ($\varphi_1$ releases $\varphi_2$),
$F\varphi:= \top U \phi$ (sometime in the future $\varphi$),
$G \varphi:= \neg F \neg\varphi$ (always in the future $\varphi$),
$O \varphi:= \top S \varphi$ (once in the past $\varphi$),
$H \varphi:= \neg O \neg\varphi$ (historically in the past $\varphi$),
$Z \varphi:= \neg Y \neg\varphi$ (yesterday $\varphi$ or at initial state),
$X^n \varphi := X X^{n-1} \varphi$ with $X^0 \varphi := \varphi$,
$Y^n \varphi := Y Y^{n-1} \varphi$ with $Y^0 \varphi := \varphi$,
$Z^n \varphi := Z Z^{n-1} \varphi$ with $Z^0 \varphi := \varphi$,
$F^{\le n} \varphi := \varphi \vee X \varphi \vee \cdots \vee X^n \varphi$,
$G^{\le n} \varphi := \varphi \wedge X \varphi \wedge \cdots \wedge X^n \varphi$,
$O^{\le n} \varphi := \varphi \vee Y \varphi \vee \cdots \vee Y^n \phi$,
$H^{\le n} \varphi := \varphi \wedge Z \varphi \wedge \cdots \wedge Z^n \varphi$.

\begin{example}
We provide an example of our logic from~\cite{Tonetta17} representing a sensor.
In this example, the sensor has an input real variable $y$, an output real variable $x$ and a Boolean flag $correct$ that represents whether or not
the value reported by the sensor is correct.
We specify that $x$ is always equal to the last correct input value
  with $\always (x = y\atl(correct))$. 
We assume that a failure is permanent with  $\always(\neg correct \rightarrow \always \neg correct)$.
Additionally, we consider also a Boolean variable $read$ representing the event of reading $x$.
  In our example, $reading$ occurs periodically with period $5$ ($read \wedge \always (read \rightarrow \nxt (\always^{\le 3} \neg read) \wedge \nxt^5 read)$):
  initially $read$ is true;
  at each step $i$, if $read$ is true, then in the states ranging from $i+1$
  to $i+4$ ($X G^{\le 3}$) $read$ is false; and, finally
  read is true at position $i+5$.
  
Finally, let us say that an alarm $a$ is true if and only if the last two read values are the same: 
  $\always (a\leftrightarrow x\atl (read) = (x\atl(read))\atl(read))$.

We can prove that, given the behaviour defined above, every point at which the sensor is not correct, is followed by an alarm in at most $10$ steps.

\begin{align*}
  &(\always (x=y\atl(correct)) \wedge \always (\neg correct \rightarrow \always \neg correct) \wedge\\
  &read\wedge\always(read \rightarrow \nxt (\always^{\le 3} \neg read) \wedge \nxt^5 read) \wedge\\
  &\always (a \leftrightarrow x \atl(read) = (x \atl(read))\atl (read)))\\
  &\rightarrow \always (\neg correct \rightarrow \future^{\le 10} a)
\end{align*}

\end{example}
\subsection{Semantics}
\label{sec:ltlsem}
We propose an alternative semantics for LTL with past operators and event freezing function 
based on the truncated semantics of LTL defined in \cite{Eisner2003ReasoningWT}.
This semantics is tailored for both finite and infinite traces. For what regards
infinite traces, the semantics is identical to the standard one.
When we reason over finite traces, the predicates are interpreted ``weakly" i.e.
all the predicates are evaluated as true at the end of the trace. 
Moreover, the semantics differentiate between output and input predicates by treating
the latter as \textit{next} operators i.e. they are evaluated as true in the last
state.

Another important difference between standard LTL and other finite semantics
(e.g. LTL$_f$\cite{ltlf}) lies in the negation. When a formula is negated the interpretation
passes from \textit{weak} to \textit{strong} and vice versa. Therefore,
a negated formula $\neg \phi$ is satisfied with weak (strong) semantics if and only if
$\phi$ is not satisfied with strong (weak) semantics. This means for instance
that a proposition $P$ is evaluated as \textit{true} at the end of a trace 
independent of the polarity of its variables. These different semantics are denoted
as $\mtn$ for weak semantics and as $\mtp$ for strong semantics; moreover, we use
$\mts$ to refer to both interpretations.
For what regards the difference with LTL$_f$, LTL$_f$ requires finite traces to
satisfies eventualities i.e. $\pi \models F p$ iff at some point in the trace $p$
is true. However, in our scenario that semantics is not desirable because we would
like to consider as ``good'' traces also traces that are prefixes of traces satisfying
$F p$ (a further discussion on the choice of semantics is given in Section~\ref{sec:sem_just}).

Contrary to the original truncated semantics of \cite{Eisner2003ReasoningWT}, we consider also
past operators, if-then-else, event freezing operators ($\atn, \atl$), functions and terms.

For what regards past operators, the semantics
is similar to standard LTL. The only alteration of the past semantics
is that $\y$ is evaluated as \textit{true} at the end of the trace. This is done
to keep the value of each formula trivially true when evaluated outside of the trace
bounds.

In this paper, terms are not evaluated with a polarity because they are 
interpreted inside predicates. Besides $ite, \atn,\atl$, these operators are 
interpreted identically to standard LTL.

For if-then-else, instead of considering 2 possible term interpretations ($u_1, u_2$),
the new semantics consider a third default value. The third value is chosen
when both the if condition and its negation are satisfied weakly (which is possible
with finite traces).
The idea is that, when both $\varphi$ and $\neg\varphi$ are weakly satisfied, we avoid committing to
either value.
As a result, with this three-valued if-then-else, it is still true that $ite(\varphi, \term_1, \term_2) = ite(\neg\varphi, \term_2, \term_1)$
which would be not true with a two value semantics.

Regarding $\varphi \atn \term$ and $\varphi \atl \term$, the new semantics considers strong satisfaction of the formula to get the value of the term.
This occurs because strong semantics ensures that variables are well-defined in the trace.
When the $\varphi$ is not strongly satisfied, a default value $def_{\varphi \atn \term}$/
$def_{\varphi \atl \term}$ is considered.

From an higher level perspective, the motivation on using the ``weak'' semantics
is briefly described in Section~\ref{sec:sem_just}.

The semantics is defined as follows:
\begin{itemize}
  \item $\pi, \struct, i \mtn pred^O(u_1,\dots, u_n) \text{ iff } |\pi| \le i \text{ or }pred^{\struct}(\pi^{\struct}(i)(u_1),\dots, \pi^{\struct}(i)(u_n))$
  \item $\pi, \struct, i \mtp pred^O(u_1,\dots, u_n) \text{ iff } |\pi| > i \text{ and }pred^{\struct}(\pi^{\struct}(i)(u_1),\dots, \pi^{\struct}(i)(u_n))$
  \item $\pi, \struct, i \mtn pred^I(u_1,\dots, u_n) \text{ iff } |\pi| \le i - 1\text{ or }pred^{\struct}(\pi^{\struct}(i)(u_1),\dots, \pi^{\struct}(i)(u_n))$
  \item $\pi, \struct, i \mtp pred^I(u_1,\dots, u_n) \text{ iff } |\pi| > i + 1 \text{ and }pred^{\struct}(\pi^{\struct}(i)(u_1),\dots, \pi^{\struct}(i)(u_n))$
  \item $\pi, \struct, i\mts \varphi_1\wedge\varphi_2\text{ iff }\pi, \struct, i\mts\varphi_1
\text{ and }\pi, \struct, i\mts\varphi_2$
  \item $\pi, \struct, i\mtn\neg\varphi\text{ iff }\pi, \struct, i\not\mtp\varphi$
  \item $\pi, \struct, i\mtp\neg\varphi\text{ iff }\pi, \struct, i\not\mtn\varphi$
  \item $\pi, \struct, i\mts\varphi_1 {\unt}\varphi_2\text{ iff there exists }
k\geq i, \pi, \struct, k\mts\varphi_2 \text{ and for all }l,
i\leq l<k,\pi, \struct, l$ $\mts\varphi_1$
  \item $\pi, \struct, i\mtn\varphi_1 {\since}\varphi_2\text{ iff } i \ge |\pi| \text{ or there exists }
k \leq i, \pi, \struct, k \mtn\varphi_2\text{ and for all }
  l, k < l \leq i,\pi, \struct, l\mts\varphi_1$
  \item $\pi, \struct, i\mtp\varphi_1 {\since}\varphi_2\text{ iff } i < |\pi| \text{ and there exists }
k \leq i, \pi, \struct, k \mtn\varphi_2\text{ and for all }
  l, k < l \leq i,\pi, \struct, l\mts\varphi_1$
\item $\pi, \struct, i\mts \nxt\varphi\text{ iff } \pi, \struct, i+1 \mts \varphi$
\item $\pi, \struct, i\mtn \y \varphi\text{ iff } i \ge |\pi| \text{ or } i>0 \text{ and } \pi, \struct,
  i-1\mtn \varphi$
\item $\pi, \struct, i\mtp \y \varphi\text{ iff } 0 < i < |\pi| \text{ and } \pi, \struct, i-1\mtp \varphi$
\end{itemize}
where $pred^I$ denotes a predicate containing input variables or $\atn$ terms, $pred^O$ denotes
a predicate that contains only output variables without at next, $sgn \in \{-,+\}$ and $pred^\struct$ is the interpretation in
$\struct$ of the predicate in $\Sigma$.

The interpretation of terms $\pi^{\struct}(i)$ is defined as follows:
\begin{itemize}
\item
  $\pi^{\struct}(i)(c)=c^{\struct}$
\item
  $\pi^{\struct}(i)(x)=s_i(x)$ if $x\in \V$
\item
  $\pi^{\struct}(i)(func(u_1,\dots, u_n))= func^{\struct}(\pi^{\struct}(i)(u_1),\ldots,
    \pi^{\struct}(i)(u_n))$
  \item $\pi^{\struct}(i)(next(u)) = \pi^{\struct}(i+1)(u) \text{ if } |\pi| > i + 1\\
    \pi^{\struct}(i)(next(u)) = def_{next(u)} \text{ otherwise.}$
  \item $\pi^{\struct}(i)(u\atn(\varphi)) = \pi^{\struct}(k)(u) \text{if there exists } k > i
   \text{ such that, for all } l, i < l < k, \pi, \struct, l \mtp \neg \varphi \text{ and } \pi, \struct,  k \mtp \varphi$;\\
    $\pi^{\struct}(i)(u\atn(\varphi)) = def_{u \atn \varphi}$ otherwise.
  \item $\pi^{\struct}(i)(u\atl(\varphi)) = \pi^{\struct}(k)(u) \text{if } i < |\pi| \text{ and there exists } k < i
    \text{ such that, for all } l, i > l > k, \pi, \struct, l \mtp \neg \varphi \text{ and } \pi, \struct,  k \mtp \varphi$;\\
    $\pi^{\struct}(i)(u\atl(\varphi)) = def_{u \atl \varphi}$ otherwise.\\
  \item $\pi^{\struct}(i)(ite(\varphi, u_1, u_2)) = 
    \begin{cases}
      \pi^{\struct}(i)(u_1) & \text{ if } \pi, \struct,  i \mtp \varphi\\
      \pi^{\struct}(i)(u_2) & \text{ if }\pi, \struct,  i \mtp \neg \varphi\\

      def_{ite(\varphi, u_1, u_2)} & \text{ otherwise}
    \end{cases}$
\end{itemize}
where $func^\struct,c^{\struct}$ are the interpretation
$\struct$ of the symbols in $\Sigma$, and $def_{u \atn \varphi}$,
$def_{u \atl \varphi}$ and $def_{ite(\varphi, u_1, u_2)}$ 
are extra variables used to represent default values.

Finally, we have that $ \pi, \struct \mt \varphi \text{ iff } \pi,
\struct, 0 \mtn \varphi$.

%
%
%
\begin{example}
Here we provide a short example to clarify the semantics.
Suppose that a local component is represented
by a temporal property $\varphi_i:= \always (i \rightarrow \nxt o)$; the property expresses that, if the input $i$ is received, at the
next step in the local trace the output $o$ is provided. In our semantics, this means that the execution 
$\pi_f = \{i \}, \{ o\}, \{ o, i\}, \{ o\}$ is satisfied by the property, but since
the semantics is weak also $\pi_f' = \{o,i\} \{o\}, \{i\}$ satisfies the property
according to the semantics. It should be noted
that when we reason over infinite executions, the truncated weak semantics is identical to the standard semantics; therefore, this semantics can be interpreted as a
generalisation of the standard one.
\end{example}

\subsection{Verification}
We present two techniques to verify LTL formulae with truncated semantics.
The first technique is a variation of the rewriting proposed in \cite{ltlf}.
It reduces the verification under finite-trace semantics to standard infinite-trace LTL
semantics by introducing a fresh Boolean variable $Tail$, which is true only at the last position
of the original finite trace and remains true in all subsequent positions.
This effectively marks the end of the trace.
The rewriting consider both traces in which $Tail$ becomes true and traces in
which $Tail$ is never true.
In our setting, the extended trace may assign arbitrary values to the input variables from the last position onward. The key intuition is that, under our semantics, the values of input variables matter only at positions where $\Tail$ evaluates to false i.e. before position $|\pi| -1$.
\begin{defi}
  We define $\tr(\varphi)$ as follows:
  \begin{align*}
    &\trm(\pred^I) := Tail \vee \pred^I\quad \trp(\pred^I) := \neg Tail \wedge \pred^I\\
    &\trs(\pred^O) := \pred^O \quad \quad \trs(\varphi_1 \vee \varphi_2) := \trs(\varphi_1) \vee \trs(\varphi_2)\\
    &\trm(\neg \varphi) = \neg \trp(\varphi) \quad \trp(\neg \varphi) = \neg \trm(\varphi)\\
    &\trm(\nxt \varphi) :=Tail \vee \nxt (\trm(\varphi)) \quad \trp(\nxt \varphi) := \neg Tail
    \wedge \nxt (\trp(\varphi))\\
    &\trs(\varphi_1 \unt \varphi_2) := \trs(\varphi_1) \unt (\trs(\varphi_2))
  \end{align*}
  where $sgn \in \{ -, +\}$ and $\tr(\varphi) := \neg Tail \wedge \always (Tail \rightarrow \nxt Tail \wedge
  \bigwedge_{v_o \in \Vo} v_o = next(v_o)) \wedge \rightarrow
  \trm(\varphi)$ is the top-level rewriting that defines the behaviour of $Tail$

\footnote{
To verify the rewritten formula over a transition system, the system must be extended with a sink state in
which the predicate $Tail$ holds.
This ensures that when $Tail$ becomes true, the system remains in that state indefinitely
(i.e., the transition relation becomes $(Tail \rightarrow {\Vo}' = \Vo \wedge Tail')
\wedge (\neg Tail \rightarrow \T)$).
Note that the rewritten formula implicitly refers to the transition relation of this modified system.
}
\end{defi}
The following theorem relates the truncated semantics to the rewriting. For each finite or infinite trace
$\pi$ we show that the given a formula $\varphi$, $\pi$ satisfies $\varphi$ if and only if a 
\begin{thm}
  Let $\pi$ be a trace over $\Vi,\Vo$ and $\pi'$ be the infinite trace over $\Vi, \Vo \cup \{ Tail\}$
  constructed from $\pi$ by extending the original trace with $Tail$ as follows.
  \begin{itemize}
    \item For all $0 \le i < |\pi| -1: \pi'(i)(\Vi, \Vo) = \pi(i)(\Vi, \Vo), \pi'(i)(Tail) = \bot$;
    \item $\pi(|\pi| - 1)(\Vo) = \pi'(|\pi| -1)(\Vo)$;
    \item for all $j \ge |\pi| -1$: $\pi'(j)(Tail) = \top$, $\pi(j)(\Vo)=\pi(j+1)(\Vo)$.
    \item for all $j \ge |\pi| -1$, values of $\Vi$ are unconstrained i.e. each input variable can be
        assigned to any value. 
  \end{itemize}
  It is true that:
  
  $$\pi \mts \varphi \Leftrightarrow \pi' \models_{LTL} \trs(\varphi)$$
\end{thm}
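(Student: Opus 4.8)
The plan is to prove the biconditional by structural induction on $\varphi$, but with a stronger inductive hypothesis that tracks the position index and both polarities simultaneously. Specifically, I would prove: for every position $i \ge 0$, every trace $\pi$ over $\Vi,\Vo$ with its extension $\pi'$ as described, and every subformula $\varphi$, both
\[
\pi,i \mtn \varphi \iff \pi',i \models_{LTL} \trm(\varphi)
\quad\text{and}\quad
\pi,i \mtp \varphi \iff \pi',i \models_{LTL} \trp(\varphi).
\]
Proving the $-$ and $+$ versions together is essential because the negation clause $\trm(\neg\varphi) = \neg\trp(\varphi)$ (and symmetrically) forces the two to be mutually dependent. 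Likewise I would need an auxiliary claim for terms: for every $i$ with $i < |\pi|$, $\pi'^{\struct}(i)(u) = \pi^{\struct}(i)(u)$ (using that $\pi'$ agrees with $\pi$ on $\Vi\cup\Vo$ at all such positions, and that $next$ inside $u$ only ever looks one step ahead — here one must check the boundary case $i = |\pi|-1$ where $next(u)$ in $\pi$ takes the default value but in $\pi'$ reads position $|\pi|$; this is why $\atn,\atl, ite$ use strong satisfaction, which fails at the tail, so the default is selected consistently). Actually the cleanest route is to handle terms and formulas by simultaneous induction, since predicates contain terms and $ite/\atn/\atl$ contain formulas.

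The base cases are the heart of the argument and where the rewriting's design must be matched against the semantics. For $pred^O(u_1,\dots,u_n)$: if $i < |\pi|-1$ then $\pi'(i)$ restricted to $\Vi\cup\Vo$ equals $\pi(i)$, $Tail$ is false at $i$ in $\pi'$, and the equivalence is immediate from the term lemma. If $i = |\pi|-1$: in $\pi$ the $-$ semantics evaluates $pred^O$ as... wait, $|\pi| \le i$ is false here, so it's the genuine predicate value on $\pi(|\pi|-1)(\Vo)$; in $\pi'$, $Tail$ is true at $i$, so $\trm(pred^O) = pred^O$ and $Tail$ doesn't help — but $\pi'(|\pi|-1)(\Vo) = \pi(|\pi|-1)(\Vo)$ by construction, so they agree. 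If $i \ge |\pi|$: the $-$ semantics gives true, and in $\pi'$, $Tail$ is true but $\trm(pred^O)=pred^O$ has no $Tail$ disjunct — so I must check the predicate actually holds; here one uses the extension's sink behavior $\pi'(j)(\Vo) = \pi'(j-1)(\Vo)$ combined with... hmm, this needs the top-level formula $\always(Tail \to \bigwedge v_o = next(v_o))$ to pin output values, but a bare $pred^O$ at a far-future position could still be false. I suspect the intended reading is that $\trs(\varphi)$ is always used under the top-level $\tr(\varphi)$ wrapper, or that $|\pi| \le i$ makes $pred^O$ vacuously handled differently — I would need to recheck whether the claim is stated for the inner $\trs$ in isolation or only at $i=0$ under $\tr$; most likely the theorem is really about $\tr(\varphi)$ at position $0$ and the inductive invariant quietly assumes the $Tail$-discipline holds, which I would make explicit as a hypothesis on $\pi'$. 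For $pred^I$: the shift by one in the semantics ($|\pi| \le i-1$ versus $|\pi| \le i$) exactly mirrors the $Tail$ disjunct disappearing one step later, which is the point of reading inputs "like next".

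The inductive cases are then routine but must be done for both polarities: $\wedge$ is trivial; $\neg$ is a direct unfolding using the dual definitions on both sides; $\nxt$ uses $\pi,i\mtn\nxt\varphi \iff \pi,i+1\mtn\varphi$ against $\trm(\nxt\varphi) = Tail \vee \nxt\trm(\varphi)$, where the $Tail$ disjunct accounts for $i+1$ possibly being at or past $|\pi|-1$ (at which point $\nxt\varphi$ is weakly true by the tail convention, and $Tail$ is true at $i$ in $\pi'$ iff $i \ge |\pi|-1$) — the strong case dually needs $\neg Tail$; $\since$ and the past operators $\y$ similarly require matching the tail conventions ($\y$ weakly true past the end) against... note the rewriting definition given doesn't list $\since$, $\y$, $ite$, $\atn$, $\atl$ explicitly, so I would either assume they are derived or extend $\tr$ homomorphically on them, checking each default-value case lines up with strong-satisfaction failing at the tail. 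I would expect the main obstacle to be precisely the boundary bookkeeping at positions $i \in \{|\pi|-1, |\pi|\}$ for the input-predicate and $\nxt$ cases — getting the off-by-one in the $Tail$ threshold to agree with the off-by-one between input and output predicate semantics — together with making rigorous the implicit assumption that $\pi'$ respects the sink discipline so that far-future predicate evaluations behave; everything else is mechanical unfolding.
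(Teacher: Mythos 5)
Your overall strategy (simultaneous structural induction over both polarities, plus a term lemma) matches the paper's, but there is a genuine gap at exactly the point you flagged and then left unresolved: your strengthened inductive hypothesis, which compares $\pi$ and $\pi'$ \emph{at the same position} $i$ for all $i \ge 0$, is false. Take an output predicate $\pred^O$ that is false at the last state of a finite $\pi$ and any $i \ge |\pi|$: the weak semantics gives $\pi, i \mtn \pred^O$ (everything is weakly satisfied past the end), while $\trm(\pred^O) = \pred^O$ evaluates on $\pi'$ to the frozen last value, i.e.\ false. So the induction you propose cannot even be stated correctly, and the cases that consume the hypothesis at positions beyond $|\pi|-1$ (Until witnesses past the end, $\nxt$ at $|\pi|-1$) have nothing to stand on. Your two candidate repairs do not close this: the sink discipline is already built into $\pi'$ by construction and does not reconcile ``weakly true past the end'' with a frozen-but-false predicate, and retreating to ``the theorem is only about $\tr(\varphi)$ at position $0$'' abandons the inductive invariant you need.

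The paper's proof avoids this by changing \emph{which} position of $\pi$ is compared: its invariant is $\pi, i' \mts \varphi \Leftrightarrow \pi', i \models_{LTL} \trs(\varphi)$ with $i' = \min(|\pi|-1, i)$, so every tail position of $\pi'$ is matched against the \emph{last} state of $\pi$, never against a position beyond the trace. With this clamping, the predicate base cases you struggled with become immediate: for $i \ge |\pi|-1$, $\trm(\pred^I)$ is true and $\trp(\pred^I)$ is false because $Tail$ holds, while $\pred^O$ is evaluated on the frozen outputs, which coincide with $\pi(|\pi|-1)(\Vo)$; for $i < |\pi|-1$ both traces agree and $Tail$ is false. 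The $\nxt$ case is then split on $i' = |\pi|-1$ (handled by the $Tail$ disjunct/conjunct) versus $i' < |\pi|-1$ (plain induction), much as you sketched. So the missing idea is precisely this reindexing of the inductive statement; without it your argument does not go through. (Your side observations --- that the proof must treat both polarities jointly, that terms need a parallel lemma, and that $\since$, $\y$, $ite$, $\atn$, $\atl$ are not covered by the displayed rewriting --- are consistent with the paper, whose proof is itself only sketched for the listed operators.)
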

\begin{proof}
  We can prove the theorem inductively over the structure of the formula for each $i \ge 1$.
  Formally the statement is: for all $i \ge 0$:
  $\pi, i' \mts \varphi \Leftrightarrow \pi, i \models_{LTL} \trp(\varphi)$ where
  $i' = \min(|\pi| -1, i)$.

  The base cases on predicates are very simple.
  If $i' \ge |\pi| -1$ then $\trm(\pred^I)$ is true, $\trp(\pred^I)$ is false and
  $\trs(\pred^O) = \pred^O$ which has the value of $\pi(|\pi|-1)(\pred^O)$ when
  $i \ge |\pi| -1$.

  If $i' < |\pi| -1$ then output variables are immediate to prove and
  input variables follows since $Tail$ is false.

  For the inductive cases, negation, disjunction and Until follow from induction.
  We need to prove only $\nxt$.
  If $i' = |\pi| -1$, then $\pi, i'\mtn \nxt \varphi$ and $\pi, i' \not\mtp \nxt \varphi$;
  since $\pi, i' \models_{LTL} Tail$, then $\pi', i \models_{LTL} \trm(\nxt \varphi)
  $ and $\pi', i \not\models_{LTL} \trp(\nxt \varphi)$.
  If $i' < |\pi| -1$, then the case holds by induction ($Tail$ is false).
\end{proof}

The second technique reduces the problem to the verification of safetyLTL fragment~\cite{Sistla85}.
The idea is that the property is valid on the truncated semantics iff the property
is valid with standard semantics and its "safety part" is valid with truncated semantics.
The verification of the safety fragment is carried out reducing the problem to
invariant checking. Contrary to the standard semantics, in our setting it is not
necessary to check that the finite trace is extendible as done instead in \cite{ifm23}.
The reduction from safetyLTL to invariant is based on \cite{CGH94, LAT03, KV01}
and have been detailed practically in \cite{ifm23,jpk60}. In the following, we propose
the rewriting to transform a formula from LTL to safetyLTL.
The transformation is performed by replacing Until with 
Release on the formula in negative normal form (see Definition \ref{def:weak} below).
The intuition is that with weak semantics and finite traces Until are interpretable
as ``Weak'' Until ($\always \varphi_1 \vee (\varphi_1 \unt \varphi_2)$)
which is easily rewritten as Release ($\varphi_2 \rel (\varphi_1 \vee \varphi_2)$).

\begin{defi}
  \label{def:weak}
Given a formula $\varphi$ in negative normal form, we define the ``Weak-to-Safety''
translation function $\weak(\varphi)$, which maps formulas interpreted under
weak semantics into equivalent formulas interpreted within the SafetyLTL fragment, as follows:
  \begin{align*}
    &\weak(\pred) := \pred
    \quad
    \weak(\neg \pred) := \neg \pred
    \\
    &\weak(\varphi_1 \vee \varphi_2) := \weak(\varphi_1) \vee \weak(\varphi_2)
    \quad \weak(\varphi_1 \wedge \varphi_2) := \weak(\varphi_1) \wedge \weak(\varphi_2)\\
    &\weak(\nxt \varphi) := \nxt \weak(\varphi) \quad \weak(\y \varphi) := \y \weak(\varphi)\\
    &\weak(\varphi_1 \unt \varphi_2) := \weak(\varphi_2) \rel (\weak(\varphi_1) \vee \weak(\varphi_2))\\
    &\weak(\varphi_1 \rel \varphi_2) := \weak(\varphi_1) \rel (\weak(\varphi_2))
  \end{align*}
\end{defi}

\begin{thm}
  For each trace $\pi$:
  \begin{itemize}
    \item If $\pi$ is an infinite trace: $\pi \mtn \varphi \Leftrightarrow \pi \models_{LTL} \varphi$.
    \item If $\pi$ is a finite trace: $\pi \mtn \varphi \Leftrightarrow \pi, \mtn \weak(\varphi)$.
  \end{itemize}
\end{thm}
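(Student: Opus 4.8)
The plan is to prove the two bullets separately by structural induction on the negative-normal-form formula $\varphi$, generalising each statement to hold at every position $i$ so that the temporal cases go through.

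\textbf{First bullet (infinite traces).} Here I would prove the slightly stronger statement that for every infinite trace $\pi$ and every $i \ge 0$, $\pi, i \mtn \varphi \Leftrightarrow \pi, i \models_{LTL} \varphi$. Since $\pi$ is infinite, $|\pi| > i$ for all $i$, so every side condition of the form ``$|\pi| \le i$'' or ``$i \ge |\pi|$'' in the weak semantics is vacuously false, and the clauses for $pred^O$, $pred^I$, $\since$, $\y$ collapse exactly to the standard LTL clauses; the clauses for $\vee$, $\wedge$, $\unt$, $\nxt$ are already syntactically identical. For negation one must be slightly careful: the NNF restriction means $\neg$ appears only in front of predicates, and $\pi,i\mtn \neg pred \Leftrightarrow \pi,i\not\mtp pred \Leftrightarrow$ (since $|\pi|>i$) not-$pred$, matching LTL. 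The term-level operators $\atn,\atl,ite,next$ that occur inside predicates are handled the same way: on an infinite trace the ``otherwise'' default branches and the $|\pi|>i+1$ guards never fire where they would differ from the standard definition, so $\pi^{\struct}(i)(u)$ agrees with the standard term evaluation. This bullet is essentially bookkeeping.

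\textbf{Second bullet (finite traces).} For a finite trace $\pi$ I would again strengthen to: for all $0 \le i \le |\pi|-1$, $\pi, i \mtn \varphi \Leftrightarrow \pi, i \mtn \weak(\varphi)$ — note that after translation the right-hand formula is in SafetyLTL, so by the first bullet (applied to the safety formula, which is still interpreted under $\mtn$) this will also connect to $\models_{LTL}$ where needed, but the core equivalence is between the two weak readings. The interesting case is $\varphi_1 \unt \varphi_2$, where $\weak$ replaces it by $\weak(\varphi_2)\, \rel\, (\weak(\varphi_1)\vee\weak(\varphi_2))$. The key lemma to establish first is that under weak finite semantics, $\pi, i \mtn \psi_1 \unt \psi_2$ iff $\pi, i \mtn \always\psi_1 \vee (\psi_1 \unt \psi_2)$, i.e. an ``eventually $\psi_2$'' obligation that is never discharged before the trace ends is nonetheless satisfied weakly — this is exactly where the weak semantics of $\unt$ differs from $\mtp$/LTL$_f$. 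Concretely, $\pi, i \mtn \psi_1 \unt \psi_2$ requires a witness $k \ge i$ with $\pi,k \mtn \psi_2$; but since at $k = |\pi|-1$ both branches of the weak/strong split make predicates (hence, by induction, all weak obligations) true "at the end"... here I need to be precise: the clause for $\unt$ as written only has the $\mts$ form with a genuine witness $k$, so the weak-Until collapse has to be derived from the fact that $\weak(\varphi_2)$ and $\weak(\varphi_1)\vee\weak(\varphi_2)$, being safety formulas, are satisfied weakly at $|\pi|-1$ whenever the relevant predicates are, which is automatic at the last position. Then I would unfold the Release clause: $\psi_a \rel \psi_b$ holds at $i$ iff for all $k \ge i$, either $\psi_b$ holds at $k$ or some $j$ with $i\le j<k$ has $\psi_a$; on a finite trace this is again the weak reading. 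Matching this against $\always\psi_1 \vee (\psi_1\unt\psi_2)$ and using the induction hypothesis $\pi,j\mtn\varphi_\ell \Leftrightarrow \pi,j\mtn\weak(\varphi_\ell)$ closes the case. The $\nxt$, $\y$, $\vee$, $\wedge$, predicate and $\rel$ cases are direct from the IH, noting $\weak$ is the identity on them modulo recursing into subformulas.

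\textbf{Main obstacle.} The delicate point — and the step I would spend the most care on — is the $\unt \mapsto \rel$ case on finite traces, specifically nailing down the claim that weak-semantics Until is equivalent to weak-semantics Weak-Until $\always\varphi_1 \vee (\varphi_1 \unt \varphi_2)$. One has to check that an until-obligation ``pending'' at the final state $|\pi|-1$ is automatically met, which relies on the observation that every NNF subformula, once $\weak$-translated, is a SafetyLTL formula whose weak value at the last position is ``true by default'' in the same way predicates are; a clean way to organise this is an auxiliary lemma stating $\pi, |\pi|-1 \mtn \weak(\varphi)$ for every $\varphi$ (or more carefully, that the weak truncated semantics of any SafetyLTL formula is monotone/total at the trace end), proved by a small separate induction before tackling the main one. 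A secondary subtlety is keeping the term operators $\atn,\atl,ite$ consistent: $\weak$ as defined only recurses through temporal/Boolean structure and does not descend into terms, so I must confirm that terms (and the predicates containing them) are left untouched by $\weak$ and that their weak evaluation is unaffected — which is fine precisely because $\weak$ does not alter them, but it is worth an explicit remark so the induction's predicate base case is airtight.
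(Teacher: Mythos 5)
Your overall plan coincides with the paper's: the infinite case is a direct collapse of the weak semantics to standard LTL, and the finite case is an induction over positions $0 \le i < |\pi|$ in which only the $\unt \mapsto \rel$ case needs work, via the ``weak Until $=$ weak Weak-Until'' collapse. However, the justification you give for that crucial case contains a genuine error. You propose to derive the collapse from the claim that formulas are ``true by default'' at the last position, culminating in the auxiliary lemma ``$\pi, |\pi|-1 \mtn \weak(\varphi)$ for every $\varphi$''. That lemma is false: at position $|\pi|-1$ the trace still assigns the output variables, and an output predicate (or, say, $\always\, p^O$) is weakly satisfied there only if it actually holds --- this is exactly the safety content that the truncated semantics is designed to retain at the final state, so nothing is ``automatic at the last position'' except input predicates. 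If you tried to prove your auxiliary lemma, the base case $pred^O$ would already fail.

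The correct mechanism --- and the one the paper uses --- is one position further out: every formula is weakly satisfied at every position $k \ge |\pi|$, i.e.\ strictly beyond the truncation point (predicates hold by the $|\pi|\le i$ disjunct, negations hold because strong satisfaction fails there, and so on by a small induction). Since the semantic clause for $\unt$ places no upper bound on the witness $k$, a pending eventuality is discharged by choosing $k \ge |\pi|$, which is what yields $\pi,i \mtn \varphi_1\unt\varphi_2$ whenever $\varphi_1$ holds weakly up to the end, and hence the equivalence with $\always\varphi_1 \vee (\varphi_1\unt\varphi_2)$, i.e.\ with $\weak(\varphi_2)\,\rel\,(\weak(\varphi_1)\vee\weak(\varphi_2))$ after applying the induction hypothesis at positions $< |\pi|$. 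With the lemma relocated to $k \ge |\pi|$ (and stated for all formulas, not just safety ones --- no appeal to SafetyLTL is needed at this point), your induction goes through and is essentially the paper's argument; your remarks on the first bullet and on $\weak$ not descending into terms are fine.
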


\begin{proof}
  In the case of infinite traces, the theorem follows from the fact that the truncated
  semantics is equivalent to LTL (with event-freezing functions).

  We prove the finite trace case by induction on the structure of the formula
  considering each possible index $0 \le i < n$ with $n = |\pi|$.

  We only need to prove the case of the operator $\unt$ since $\weak$ is transparent
  for the other operators. Moreover, we need to prove only weak semantics because
  the top level formula is in negative normal form. 

  $\pi, i \mtn \varphi_1 \unt \varphi_2 \Leftrightarrow\exists k \ge i \text{ s.t. }
  \pi, k \mtn \varphi_2 \text{ and }\forall i \le j < k: \pi, j \mtn \varphi_1$.
  By induction we obtain that $\dots \Leftrightarrow \exists k > i \text{ s.t. }
  \pi, k \mtn \weak(\varphi_2) \text{ and } \forall i \le j < k: \pi, j \mtn \weak(\varphi_1)$.

    If $k \ge |\pi|$ then $\pi, k \mtn \varphi_2$ since all formulae are satisfied
    weakly at the end of the trace; therefore, $\pi, i \mtn \always \varphi_1
    \vee \varphi_1 \unt \varphi_2$ which is equivalent to $\varphi_2 \rel (\varphi_1 \vee \varphi_2)$.
\end{proof}

\section{Compositional reasoning}
\label{sec:problem}
\subsection{Formal problem}
\label{sec:mainp}
Compositional verification proves the properties of a system by proving
the local properties on components and by checking that the
composition of the local properties satisfies the global one (see
\cite{RBH+01} for a generic overview). This reasoning is expressed
formally by inference \ref{ifr:comp}, which is parametrized by a
function $\gamma_S$ that combines the component's implementations and
a related function $\gamma_P$ that combines the local properties.

\begin{ifr}
\label{ifr:comp}
  Let $\Comp_1, \Comp_2,\dots,\Comp_n$ be a set of $n$ components, $\Prop_1,\Prop_2,\dots,\Prop_n$ be
  local properties on each component, $\gamma_S$ is a function that defines the
  composition of $\Comp_1,\Comp_2,\dots,\Comp_n$, $\gamma_P$ combines the properties depending
  on the composition of $\gamma_S$ and $\Prop$ a property. The following inference is true:\\
  \[ 
   \infer{\gamma_S(\Comp_1,\Comp_2,\dots,\Comp_n) \models \Prop}{
    \infer{\gamma_S(\Comp_1,\Comp_2,\dots,\Comp_n)\models \gamma_P(\Prop_1,\Prop_2,\dots,\Prop_n)}{
      \Comp_1 \models \Prop_1, \Comp_2 \models \Prop_2,\dots, \Comp_n \models \Prop_n
    }
    & \deduce{\gamma_P(\Prop_1,\Prop_2,\dots,\Prop_n) \models \Prop}{}
}
\]
\end{ifr}
\noindent 
The problem we address in this paper is to define proper $\gamma_S, \gamma_P$
representing the composition of possibly terminating components and temporal properties
such that the inference rule holds.

\subsubsection{Motivation for the Semantic choice}
\label{sec:sem_just}
In this work, we adopt the weak semantics of temporal operators described in Section~\ref{sec:logic}.
This choice is motivated by the characteristic of our setting.
Specifically, during composition, local traces may be truncated by the scheduler
or due to a component failure,
and these truncated traces remain relevant and may contribute to the
satisfaction of the system-level property.
In a sense, we need to pertain the safety part of the property ensured by the
finite trace, while disregarding the liveness part of it.
The weak semantics proposed by \cite{Eisner2003ReasoningWT} allows us
to account for this, ensuring that truncated traces
are not disregarded in the logical reasoning process.

As a motivating example, consider the local property $\varphi_{c1}$ in Figure~\ref{fig:ex}.
This property states that whenever component $c1$ receives an input $in_1$ via the $rec_1$ signal, it will attempt to propagate the message to component $c2$ until it succeeds.
Suppose that $c1$ receives the value $3$ at time $0$ and begins this propagation.
If a failure occurs or the scheduler stops activating $c1$, its execution may be prematurely truncated, and only a finite prefix of its local trace is available.
Still, this partial trace captures the component’s behaviour up to the end of its execution--
specifically, in this case, that $c_1$ initiated the process of sending the input to $c_2$.
Weak semantics ensures that such truncated traces are preserved in system-level reasoning, without requiring that liveness properties be fully realized.
If (on the composition) stronger guarantees are required--e.g., to ensure infinite progress or rule out failures—this can be encoded explicitly in the environment assumption, such as
$\alpha' := \alpha \wedge \always \future \run_{c1}$,
ensuring fair scheduling and infinite execution.

At the global level, traces may also be finite to support hierarchical composition. 
In such cases, weak semantics provides a conservative and practical interpretation: a property satisfied on a finite trace indicates that no counterexample has been found yet, though one may still arise on an extension.
Consider the system-level property
$\varphi := \always ((rec_1 \wedge in_1 = v) \rightarrow \future (send \wedge out = v \vee fail \wedge out_f = v))$
from Figure~\ref{fig:ex}.
In this case, since $\varphi$ is a liveness property, every finite trace satisfy weakly $\varphi$.
In general, the only problematic cases would involve finite traces that actually constitute counterexamples, for instance due to reaching a deadlock state that violates liveness.
However, removing deadlocks is computationally costly, especially in a compositional setting.
Thus, adopting weak semantics is a conscious compromise: it avoids penalizing valid partial behaviours, and in practice, systems are typically defined to avoid such pathological deadlock cases.


%

\subsection{Interface Transition Systems}

In this paper, we represent I/O components as Interface Transition Systems, 
a symbolic version of
interface automata ~\cite{DBLP:conf/sigsoft/AlfaroH01} that considers I/O variables
instead of I/O actions.

\begin{defi}
  \label{def:its}
  An Interface Transition System (ITS) $\M$ is a tuple\\
  $ \Pp = \langle\Vi, \Vo, \I, \T, \SF \rangle$
  where:
  \begin{itemize}
    \item $\Vi$ is the set of input variables while
      $\Vo$ is the set of output variables where $\Vi \cap \Vo= \emptyset$.
    \item $\V := \Vi \cup \Vo$ denotes the set of the variables of $\M$
     \item $\I$ is the initial condition, a formula over $\Vo$, 
     \item $\T$ is the transition condition, a formula over $\V \cup {\Vo}'$
       where ${\Vo}'$ is the primed versions of $\Vo$
    \item $\SF$ is the set of strong fairness constraints, a set of pairs of formulas over $\V$.
  \end{itemize}
  A symbolic transition system with strong fairness
  $\M = \langle \V, \I, \T, \SF \rangle$ is an
  interface transition system without input variables (i.e.,
  $\langle \emptyset, \V,  \I, \T, \SF \rangle$). 
\end{defi}

\begin{defi}
A trace $\pi$ of an ITS $\M$ is a trace
  $\pi = s_0 s_1 \dots \in \Pi(\Vi, \Vo)$ s.t.
  \begin{itemize}
    \item $s_0 \models \I$
    \item For all $i < |\pi| - 1$, $s_{i} \cup s_{i+1}' \models \T$, and
    \item If $\pi$ is infinite: for all $\langle f_A, f_G \rangle \in \SF$, If for all $i$ there exists $j \ge i$, $s_j \models f_A$ then for all $i$ there exists $j \ge i: s_j \models f_G$.
  \end{itemize}
  The finite and infinite languages $\Laf(\M)$ and $\Lainf(\M)$ of an ITS $\M$ is the
  set of all finite and infinite traces of $\M$, respectively.
  Finally, we denote $\La(\M) := \Laf(\M) \cup \Lainf(\M)$ as the language of $\M$.
\end{defi}

\begin{defi}
  Let $\pi= s_0 s_1 \dots$ be a trace of an ITS $\M$ and $\V' \subseteq \V$ a set of symbols
  of $\M$. We denote $s_i(\V')$ as the restriction of the assignment $s_i$ to the symbols
  of $\V'$; moreover, we denote $\pi_{|\V'} := s_0(\V') s_1(\V') \dots$ as the restriction of all the state assignments
  of $\pi$ to the symbols of $\V'$. Furthermore, we denote
  $\La(\M)_{|\V'} = \{ \pi_{|\V'} | \pi \in \La(\M) \}$
  as the restriction of all the traces of the language of an ITS $\M$ to a set of symbols
  $\V' \subseteq \V$.
\end{defi}
\begin{defi}
Let $\M_1, \dots,\M_n$ be $n$ \ita, they are said compatible iff they share respectively only
  input with output (i.e. $\forall i \le n, j \le n \text{ s.t. }i\neq j: \V_i \cap \V_j = (\Vo_i \cap \Vi_j) \cup (\Vi_i \cap \Vo_j)$)
\end{defi}

\subsection{Asynchronous composition of ITS}
We now provide the notion of asynchronous composition ($\otimes$) that will be used
as the composition function ($\gamma_S$) of Inference~\ref{ifr:comp}.
\begin{defi}
\label{def:symbasynccomp}
  Let $\M_1, \dots, \M_n$ be $n$ compatible interface transition systems, let 
  $\run_1,\dots,$ $\run_n$ be $n$ Boolean variables not occurring in $\M_1, \dots, \M_n$
  (i.e. $\run_1,\dots \run_n  \notin \bigcup_{1 \le i \le n} (\Vi_i \cup \Vo_i)$)
  and $\e_1, \dots,\e_n$ be $n$ Boolean variables not occurring in $\M_1, \dots,\M_n$ i.e.
  $\e_1,$ $\dots \e_n  \notin \bigcup_{1 \le i \le n} (\Vi_i \cup \Vo_i)$.
  $\M_1 \otimes \dots \otimes \M_n = \langle \Vi, \Vo, \bigwedge_{1 \le i \le n}{\I_i}, \T, \SF\rangle$ where:
  \begin{align*}
    \Vi =& (\bigcup_{1 \le i \le n}{\Vi_{i} \cup \{ \run_i\}}) \setminus (\bigcup_{1 \le i < n} \bigcup_{i < j \le n} \V_i \cap \V_j)\\
    \Vo =& (\bigcup_{1 \le i \le n}{\Vo_i} \cup \{\e_i\}) \cup (\bigcup_{1 \le i < n} \bigcup_{i < j \le n} \V_i \cap \V_j)\\
    \T =& \bigwedge_{1 \le i \le n}{((\run_i \rightarrow \T_i) \wedge
    \fcond^{\M_i} \wedge (\e_i \leftrightarrow \e_i' \wedge \neg \run_i))}\\
    \SF =&\bigcup_{1 \le i \le n} (\{ \langle \top, \run_i \vee \e_i \rangle \} \cup 
    \{\langle \run_i \wedge \varphi_a,
    \run_i \wedge \varphi_g \rangle | \langle \varphi_a, \varphi_g \rangle \in \SF_{i} \})
  \end{align*}
  where $\fcond^{\M_i} := \neg \run_i \rightarrow  \bigwedge_{v \in \Vo_i}{v=v'}$.
\end{defi}
\noindent 
The operator $\otimes$ provides a notion of asynchronous composition of Interface Transition Systems based
on interleaving. Each component can either run (execute a transition) or stutter
(freeze output variables). Contrary to our previous work\cite{nfm22},
this new definition allows for finite executions of local components. For each $i$,
we introduce
the variable $\run_i$ representing the execution of a transition of  $\M_i$;
furthermore, we introduce the prophecy variables $\e_i$ that monitor whether or not a component will execute new transitions in the future i.e. $\e_i$ is equivalent to $\always \neg \run_i$.

\begin{defi}
  \label{def:pr}
  Let $\pi$ be a trace of $\M=\M_1 \otimes \dots \otimes\M_n$ with $i \le n$.
  We define the projection function $\pr_{\M_i} : \Pi(\V) \rightarrow \Pi(\V_i)$ as follows:
  $$ \pr_{\M_i}(\pi) = \begin{cases}
    s_{\map_0}(\V_i), \dots & \text{ If }\pi \text{ is infinite and } \pi \models \always \future run_i\\
    s_{\map_0}(\V_i), \dots, s_{\map_{n - 1}}(\V_i), s_{\map_n}(\Vo_i) & \text{ Otherwise }
  \end{cases}$$
  where $\map_k$ is the sequence mapping each state of $\pi_i$ into a state
  of $\pi$ as follows:
  $$ \map_k :=
  \begin{cases}
    k & \text{ If } k < 0\\
    \map_{k-1} + 1 & \text{ If } k \ge |\pi| - 1\\
    k' \mid k'> \map_{k - 1}, \pi, k' \models \run_i \text{ and }
    \forall_{\map_{k-1} < j' < k'} \pi, j' \not\models \run_i & \text{ Otherwise}
  \end{cases}
  $$
  Moreover, we define the inverse operator of $\pr$, denoted by $\St$:
  $$\St_{\M_i}(\pi) = \{ \pi' | \pr_{\M_i}(\pi') = \pi\}$$
\end{defi}
Definition \ref{def:pr} provides a mapping between the composed ITS and the local
transition system. The function $\map_k$ maps indexes of the local trace to the
indexes of the global trace. For each $k$ in the range of the trace excluding the
last point (i.e. $[0, |\pi| - 1)$) $\map_k$ represent the $k$-th occurrence of
$\run$ counting from $0$. It should be noted that $\map_0$ is not $0$ but the first 
point of the trace in which $\run_i$ is true ($k' | \map_{-1} < k', \pi, k' \models\run_i$ 
and $\forall \map_{-1} < k'' < k'$ $\pi \not\models \run_i$). Finally, the projection
of a global trace to a local trace is defined using $\map$.
A graphical representation of projection is shown in Figure
\ref{fig:pr}. We now show that the language of each local ITS $\M_i$ contains
the language of the composition projected to the local ITS. From this result
we can derive a condition for $\gamma_P$ such that Inference \ref{ifr:comp}
holds. We do that by considering a mapping between local trace and global traces that follows
the same mapping $\map_i$.
\begin{figure}
    \begin{tikzpicture}[>=latex,font=\sffamily, node distance=45pt, scale=.8]
      \tikzstyle{ststate} = [circle, fill=pink!!10, draw]
      \tikzstyle{state} = [circle, fill=white, draw]
      \tikzstyle{textt} = [text width=5em]

        \node[state] (gs1) {\small $\bar{s}_{j+0}$};
        \node[ststate, right of = gs1] (gsst1) {\small $\bar{s}_{j+1}$};
        \node[ststate, right of = gsst1] (gsst2) {\small $\bar{s}_{j+2}$};
        \node[state, right of = gsst2] (gs2) {\small $\bar{s}_{j+3}$};
        \node[ststate, right of = gs2] (gsst3) {\small $\bar{s}_{j+4}$};
        \node[state, right of = gsst3] (gs3) {\small $\bar{s}_{j+5}$};

        \node[right = 15pt of gs3] (rdot2) {\small \dots};
        \node[left = 15pt of gs1] (ldot2) {\small \dots};
        \node[left = 10pt of ldot2] (globlabel) {\large $\pi$ };
        
        \node[textt, above =-1pt of gs1] (gas1) {\small $v_i,\neg v_o,\newline \run_i$};
        \node[textt, right of =gas1] (gasst1) {\small $\neg v_i, \neg v_o,\newline\neg \run_i$};
        \node[textt, right of =gasst1] (gasst2) {\small $\neg v_i, \neg v_o,\newline\neg \run_i$};
        \node[textt, right of = gasst2] (gas2) {\small $v_i,\neg v_o,\newline\run_i$};
        \node[textt, right of = gas2] (gasst3) {\small $v_i, v_o,\newline\neg \run_i$};
        \node[textt, right of = gasst3] (gas3) {\small $\neg v_i,v_o,\newline\run_i$};

      \node[state, below of = gs1] (s1) {\small $s_{i+0}$};
      \node[state, below of = gs2] (s2) {\small $s_{i+1}$};

      \node[state, below of = gs3] (s3) {\small $s_{i+2}$};

      \node[right = 15pt of s3] (rdot) {\small \dots};
      \node[left = 15pt of s1] (ldot) {\small \dots};
      \node[left = 10pt of ldot] (loclabel) {\large $\pi_i$ };
      
      \node[below =-1pt of s1] (as1) {\small $v_i,\neg v_o$};
      \node[below =-1pt of s2] (as2) {\small $v_i,\neg v_o$};
      \node[below =-1pt of s3] (as3) {\small $\neg v_i,v_o$};

      \draw[->] (ldot) edge[bend left] node[below]{} (s1);
      \draw[->] (s1) edge[bend left] node[below]{} (s2);

      \draw[->] (s2) edge[bend left] node[below]{} (s3);

      \draw[->] (s3) edge[bend left] node[below]{} (rdot);

        \draw[->] (ldot2) edge[bend left] node[below]{} (gs1);
        \draw[->] (gs1) edge[bend left] node[below]{} (gsst1);

        \draw[->] (gsst1) edge[bend left] node[below]{} (gsst2);

        \draw[->] (gsst2) edge[bend left] node[below]{} (gs2);

        \draw[->] (gs2) edge[bend left] node[below]{} (gsst3);

        \draw[->] (gsst3) edge[bend left] node[below]{} (gs3);
        \draw[->] (gs3) edge[bend left] node[below]{} (rdot2);

        \draw[->, color=red] (gs1.south) -- (s1.north);
        \draw[->, color=red] (gs2.south) -- (s2.north);
        \draw[->, color=red] (gs3.south) -- (s3.north);
    \end{tikzpicture}
  \caption{Graphical view of trace projection. White states of $\pi$ represent
  the states of the sequence $\map_i$, pink states represent states in which the
  local component stutters, and red arrows represent the link between the states
  of $\pi$ and the states of $\pi_i$ formally represented by $\map_k$.}
  \label{fig:pr}
\end{figure}

\begin{thm}
  \label{thm:proj}
  Let $\M = \M_1 \otimes \dots \otimes \M_n$:
  $$\text{For all } 1 \le i \le n: \La(\M_i) \supseteq \{ \pr_{\M_i}(\pi) | \pi \in \La(\M)\}$$
\end{thm}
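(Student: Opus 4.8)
The plan is to show that for every trace $\pi \in \La(\M)$ with $\M = \M_1 \otimes \dots \otimes \M_n$, the projection $\pr_{\M_i}(\pi)$ is itself a trace of $\M_i$; i.e.\ it satisfies the three conditions of being a trace of $\M_i$ (initial condition, transition relation, strong fairness). Fix $i$ and write $\pi = \bar s_0 \bar s_1 \dots$, $\pi_i := \pr_{\M_i}(\pi) = s_0 s_1 \dots$ where $s_k = \bar s_{\map_k}(\V_i)$ for indices $k$ in range (and the last state, when $\pi_i$ is finite, is $\bar s_{\map_{\ell}}(\Vo_i)$ for the appropriate $\ell$). The key structural fact I would extract first is the behaviour of the sequence $\map_k$: for $0 \le k < |\pi_i| - 1$, $\map_k$ is the $k$-th position (counting from $0$) at which $\run_i$ holds in $\pi$, so in particular $\pi, \map_k \models \run_i$, and between consecutive $\map$-indices $\run_i$ is false. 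By the definition of $\T$ in the composition, at every position where $\run_i$ is false the conjunct $\fcond^{\M_i}$ forces $v = v'$ for all $v \in \Vo_i$, so the output variables of $\M_i$ are frozen on all the stuttering steps $\map_{k}, \map_k+1, \dots, \map_{k+1}-1$.

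First I would handle the initial condition: $\bar s_0 \models \I_i$ because $\I = \bigwedge_j \I_j$ and $\I_i$ is a formula over $\Vo_i$; then I must argue $s_0 = \bar s_{\map_0}(\V_i)$ also satisfies $\I_i$. Here $\map_0$ is the first position where $\run_i$ is true, and since $\I_i$ only constrains $\Vo_i$, the freezing of $\Vo_i$ from position $0$ up to $\map_0$ (all those positions have $\neg \run_i$) gives $\bar s_{\map_0}(\Vo_i) = \bar s_0(\Vo_i)$, hence $s_0 \models \I_i$. (If $\map_0$ does not exist, i.e.\ $\run_i$ never holds, then $\pi_i$ is just a one-state trace $\bar s_0(\Vo_i)$ which satisfies $\I_i$ and vacuously the other conditions; I'd note this degenerate case explicitly.) Second, for the transition relation: take $k$ with $0 \le k < |\pi_i| - 1$. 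I want $s_k \cup s_{k+1}' \models \T_i$. Since $\pi, \map_k \models \run_i$, the conjunct $\run_i \rightarrow \T_i$ of $\T$ evaluated at $\map_k$ gives $\bar s_{\map_k} \cup \bar s_{\map_k + 1}' \models \T_i$, i.e.\ $\bar s_{\map_k}(\V_i) \cup \bar s_{\map_k+1}(\Vo_i)'$ satisfies $\T_i$ (as $\T_i$ is over $\V_i \cup \Vo_i'$). Now $s_k(\V_i) = \bar s_{\map_k}(\V_i)$ by definition, and $s_{k+1}(\Vo_i) = \bar s_{\map_{k+1}}(\Vo_i)$; by the freezing argument above, $\bar s_{\map_{k+1}}(\Vo_i) = \bar s_{\map_k + 1}(\Vo_i)$ because all positions strictly between $\map_k$ and $\map_{k+1}$ — and in particular the stretch from $\map_k+1$ to $\map_{k+1}$ — have $\neg \run_i$ and thus freeze $\Vo_i$. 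Hence $s_k \cup s_{k+1}' \models \T_i$, as required. The case $\map_k = \map_{k+1} - 1$ (no stuttering) is immediate. One has to be a little careful near the end of a finite $\pi$: if $|\pi_i| - 1$ falls into the "$\map_{k-1}+1$" branch of the definition (because $\pi$ is finite or $\run_i$ stops firing), the last step still works because $\Vo_i$ is frozen there too, matching the "Otherwise" branch of $\pr_{\M_i}$ that records only $\bar s_{\map_n}(\Vo_i)$.

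Third, the fairness condition, which applies only when $\pi_i$ is infinite — and by Definition~\ref{def:pr} that happens exactly when $\pi$ is infinite and $\pi \models \always\future\,\run_i$. Suppose $\langle \varphi_a, \varphi_g\rangle \in \SF_i$ and that $\varphi_a$ holds infinitely often along $\pi_i$. Each state $s_k$ of $\pi_i$ (in range) equals $\bar s_{\map_k}(\V_i)$ and $\pi, \map_k \models \run_i$, so $\pi, \map_k \models \run_i \wedge \varphi_a$ for infinitely many $\map_k$. Since $\langle \run_i \wedge \varphi_a, \run_i \wedge \varphi_g\rangle \in \SF$ by construction, and $\pi$ satisfies the fairness of $\M$, we get $\pi, j \models \run_i \wedge \varphi_g$ infinitely often; each such $j$ is a position with $\run_i$ true, hence $j = \map_{k}$ for some $k$, and thus $s_k \models \varphi_g$ infinitely often in $\pi_i$. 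That discharges the last condition.

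The main obstacle I anticipate is the bookkeeping around the endpoints of finite traces: matching the "Otherwise" branch of $\pr_{\M_i}$ (which keeps $\bar s_{\map_n}(\Vo_i)$ as a partial final state over $\Vo_i$ only) with the definition of a trace of $\M_i$ (whose last state must be over $\Vo_i$ only), and making sure the last $\map$-index is well-defined and the freezing argument still applies when $\run_i$ eventually stops. Everything else is a direct unfolding of the definitions of $\otimes$, $\pr_{\M_i}$, and $\map_k$, combined with the single recurring observation that $\neg \run_i$ implies $\Vo_i$ is frozen, so the values of $\M_i$'s variables along the stuttering stretches agree with the values at the $\map$-indices.
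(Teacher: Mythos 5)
Your proof is correct and follows essentially the same route as the paper's: it verifies the initial condition via $\I=\bigwedge_j \I_j$ together with the freezing of $\Vo_i$ under $\neg\run_i$, the transition condition via the conjunct $\run_i\rightarrow\T_i$ plus $\fcond^{\M_i}$ across stuttering stretches, and strong fairness via the composed pairs $\langle\run_i\wedge\varphi_a,\run_i\wedge\varphi_g\rangle$. The only difference is presentational (direct verification of the three trace conditions rather than the paper's induction on prefix length), and you are in fact slightly more explicit than the paper about the output-freezing argument for the initial state and the finite-trace endpoint.
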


\begin{proof}
  Given a trace $\pi \in \La(\M), \pi_i := \pr_{\M_i}(\pi)$.
  We prove the theorem by induction on the length of $\pi_i$.
  The inductive hypothesis states that if $|\pi_i| > k + 1$ and $\pi_i^{0 \dots k} \in \Laf(\M_i)$,
  then $\pi_i^{0 \dots k + 1} \in \Laf(\M_i)$.

  \begin{itemize}
    \item Base case:\\ 
      By definition $\I := \bigwedge_{1 \le i \le n} \I_i$ and $\pi_i(0) = \pi(\map_0)(\V_i)$.
      We derive that $\pi_i, 0 \models \I_i \Rightarrow \pi_i^{0 \dots 0} \in \Laf(\M_i)$.
      It should be noted that this holds also if $|\pi_i| = 1$ and $\pi, |\pi| - 1 \not\models run_i$
      because $\I_i$ is a proposition on symbols of $\Vo_i$.
    \item Inductive case:\\
      By definition $\pi_i(k) = \pi(\map_k)(\V_i)$ where $\map_k$ is the kth position
      s.t. $\run_i$ holds. Therefore, 
      $\pi(\map_k) \pi(\map_k + 1) \models \run_i \rightarrow \T_i \Rightarrow
      \pi(\map_k) \pi(\map_k + 1) \models \T_i$.
      By $\fcond^{\M_i}$ we obtain that $\pi_{\map_{k+1}}(\Vo_i) = \pi_{\map_k + 1}(\Vo_i) = \pi_i(k+1)(\Vo_i)$.
      Since $\T_i$ reasons over symbols of
      $\V \cup {\Vo}'$, then $\pi(\map_k) \pi(\map_{k} + 1) \models \T_i \Leftrightarrow
      \pi_i(k) \pi_i(k + 1) \models \T_i$. Therefore, $\pi_i, k \models \T_i$ which
      guarantees that $\pi_i^{0 \dots k + 1} \in \Laf(\M_i)$.
  \end{itemize}
  We proved the theorem for finite traces, we now extend the proof to consider infinite traces.
  To do so, it is sufficient to prove that each infinite trace $\pi_i$ satisfies
  the strong fairness conditions of $\SF_i$ since it already satisfies $\I_i$ and $\T_i$.

  Since $\pi_i = \pr_{\M_i}(\pi)$ and $\pi \in \La(\M)$, $\pi \models 
  \bigwedge_{\langle f_a, f_g \rangle \in \SF} (\always \future f_a \rightarrow \always \future 
  f_g)$. In particular, due to the composition,
  $\pi \models \bigwedge_{\langle f_a, f_g \rangle \in \SF_i}
  (\always \future (\run_i \wedge f_a) \rightarrow \always \future (\run_i \wedge f_g))$
  The projection defines the sequence $\map_0, \dots$ representing the points in which
  $\run_i$ holds; therefore, for all $\langle f_a, f_g\rangle \in $if for all $k$ there exists $j \ge k$ s.t. $\pi, \map_k \models f_a$
  then for all $k'$ there exists $j' \ge k'$ s.t. $\pi, \map_k' \models f_g$. From
  the projection definition then $\pi_i \models \always \future f_a \rightarrow \always \future f_g$
  for each $\langle f_a, f_g \rangle \in \SF_i$.
\end{proof}

\begin{defi}
  Let $\M_1, \dots, \M_n$ be $n$ \ita{}, we define $\gamma_S$ as follows:
  $$\gamma_S(\M_1, \dots, \M_n) := 
  \M_1 \otimes \dots \otimes \M_n$$
\end{defi}

From Theorem~\ref{thm:proj}, we obtain a composition $\gamma_S$ for which each projected global
trace is an actual behaviour of a local trace. Therefore, when we consider the global
traces, we do not introduce new local traces which cannot be witnessed locally.

To complete our compositional reasoning, we need to define the function $\gamma_P$
such that Inference \ref{ifr:comp} holds. To do so, Section \ref{sec:rewr} provides
a rewriting technique that maps each local trace satisfying $\varphi_i$ to a global
trace satisfying $\gamma_P$.

\section{Rewriting}
\label{sec:rewr}
In this section, we introduce a rewriting-based approach for the composition of
local properties. In section \ref{sec:r_simple}, we present the rewriting of
the logic of Section \ref{sec:logic} that maps local
properties to global properties with proofs and complexity results.
In section \ref{sec:trr}, we propose an optimized version of the rewriting.
Finally, in section \ref{sec:trrufa}, we propose
a variation of the optimized rewriting tailored for infinite executions of local
properties i.e. in which the semantics is the one of standard LTL because we assume
fairness of $\run_i$.

To simplify the notation, we assume to be given $n$ interface transition systems 
$\M_1, \dots,$ $\M_n$, a composed ITS $\mathcal{C} = \M_1 \otimes \dots \otimes \M_n$,
a trace $\pi$ of $\M_i$ with $i < n$, a local property $\varphi$ and a local term $u$. For brevity,
we refer to $\R_{\M_i}$, $\R^*_{\M_i}$, $\Ropt_{\M_i}$, $\St_{\M_i}$, $\pr_{\M_i}$,
$\e_i$ and $\run_i$ as respectively $\R, \R^*, \Ropt, \Ropt^*, \St, \pr, \e
\text{ and } \run$. Moreover, we will refer to $\map_0, \map_1, \dots$ as the sequence 
of definition \ref{def:pr}.
As in section \ref{sec:ltlsem}, we denote input predicates
with apex $I$: $\pred^I$ and
the output predicates and terms with apex $O$: $\pred^O$.
Finally, we denote $sgn \in \{ -, +\}$.

\subsection{Trucanted LTL with Event Freezing Function Compositional Rewriting}
\label{sec:r_simple}
In this section, we propose a rewriting to asynchronously
compose propositional truncated LTL properties over Interface Transition Systems symbols.

The idea is based on the notions of projection (introduced in Definition \ref{def:pr})
and asynchronous composition (introduced in Definition \ref{def:symbasynccomp}).
We produce a rewriting $\R^*$ for $\varphi$ such that each trace $\pi$ of $\M_i$
satisfies the rewritten formula iff the projected trace satisfies the original
property.

\begin{defi}
  We define $\R$ as the following rewriting function:
\begingroup
\allowdisplaybreaks
  \begin{align*}
    &\Rm(\pred^I(\term_1, \dots, \term_n)) := \neg \run \vee
    \pred^I(\Rm(\term_1), \dots, \Rm(\term_n))\\
    &\Rp(\pred^I(\term_1, \dots, \term_n)) := \run \wedge 
    \pred^I(\Rp(\term_1), \dots, \Rp(\term_n))\\
    &\Rs(\pred^O(\term_1, \dots, \term_n)) :=
      \pred^O(\Rs(\term_1), \dots, \Rs(\term_n))\\
    &\Rs(\varphi_1 \vee \varphi_2) := \Rs(\varphi_1) \vee \Rs(\varphi_2)\\
    &\Rm(\neg \varphi) := \neg \Rp(\varphi), \Rp(\neg \varphi) := \neg \Rm(\varphi)\\
    &\Rm(\nxt \varphi) := \nxt (\state \rel (\neg \state \vee \Rm(\varphi)))\\
    &\Rp(\nxt \varphi) := \nxt (\neg \state \unt (\state \wedge \Rp(\varphi)))\\
    &\Rm(\varphi_1 \unt \varphi_2) := (\neg \state \vee \Rm(\varphi_1)) \unt
      ((\state \wedge \Rm(\varphi_2)) \vee \y \e)\\
    &\Rp(\varphi_1 \unt \varphi_2) := (\neg \state \vee \Rp(\varphi_1)) \unt
      (\state \wedge \Rp(\varphi_2))\\
    &\Rs(\y \varphi) := \y (\neg \run \since (\run \wedge \Rs(\varphi)))\\
    &\Rs(\varphi_1\since \varphi_2) := (\neg \state \vee \Rs(\varphi_1)) \since
    (\state \wedge \Rs(\varphi_2))\\
    &\Rs(\fun(\term_1, ..., \term_n)) := \fun(\Rs(\term_1), ..., \Rs(\term_n))\\
    &\Rs(x) := x, \Rs(c) := c\\
    &\Rs(ite(\varphi, \term_1, \term_2)) :=
      ite(\Rp(\varphi), \Rm(\term_1), ite(\Rp(\neg \varphi), \Rm(\term_2), def_{ite(\varphi, u_1, u_2)}))\\
    &\Rs(next(u)) := \Rs(u) \atn (\state)\\
    &\Rs(u \atn \varphi) := \Rs(u) \atn (\state \wedge \Rp(\varphi))\\
    &\Rs(u \atl \varphi) := \Rs(u) \atl (\state \wedge \Rp(\varphi))
  \end{align*}

\endgroup
  where $\state := \run \vee (\z \run \wedge \e)$ and $\R(\varphi) = 
  \R^-(\varphi)$.
\end{defi}
\noindent 
$\R$ transforms the formula applying $\run$ and $\state$ to $\varphi$. The general
intuition is that when $\run$ is true, the local component triggers a transition.
Moreover, $\state$ represents a local state of $\pi$ in the global trace. It 
should be noted that the rewriting
has to deal with the last state. Therefore, $\state$ is represented by a state that either satisfies $\run$ or is the successor of the last state that
satisfies $\run$.

For output predicates, the rewriting is transparent because the projection definition
guarantees that each state $i$ of $\pi$ has the same evaluation of each $\map_i$
state of $\pist$. Although this holds for input predicates as well, there is a
semantic technicality that should be taken into account. Input predicates are
evaluated differently on the last state: with \textit{strong} semantics ($+$) they
do not hold while on \textit{weak} semantics ($-$) they hold.
Since (i) the last state of a local trace might not be the last state of the global
trace, and (ii) a variable that locally is an input variable might be an output
variable of the composed system, we have to take input predicates into account inside the
rewriting.
Therefore, with \textit{strong} semantics, the rewriting forces the predicate
to hold in a transition i.e. when $\run$ holds ($\pi, i \mts \pred^I \Leftrightarrow
i< |\pi| - 1$ and $\pred^I(\pi(i)(\term_1),\dots, \pi(i)(\term_n))$).
The \textit{weak} semantics is managed specularly.

Regarding $\nxt$, $\Rm$ needs to pass from point $\map_i$ to $\map_{i+1}$ and
to verify that the sub-formula is verified in that state. The first $\nxt$ passes
from $\map_i$ to $\map_i + 1$. Then, the rewriting skips all states that are not $\map_{i+1}$
and ``stops'' in position $\map_{i+1}$.
Figure \ref{fig:rewr_x} shows an intuitive representation of the rewriting of $\nxt$ when
the trace is not in its last state.

For Until formulae, the rewriting needs to ``skip'' all points that either do not
belong to the local trace or satisfy the left part of the formula while it must
``stop'' to a point that satisfies the right part and is a state in the local trace.
To do so, it introduces a disjunction with $\state$ on the left side of the
formula and a conjunction with $\state$ on the right side of the formula.
When the rewriting and the formula are interpreted weakly, reaching the end of the local
trace makes the formula true; therefore, $\y \e$ is also put in disjunction
with the right side of the formula.
Figure \ref{fig:rewr_u} shows an intuitive representation of the rewriting of $\unt$
when the trace does not terminate before reaching $b$.
\begin{figure}
  \begin{tikzpicture}[>=latex,font=\sffamily, node distance=50pt]
    \tikzstyle{ststate} = [circle, fill=pink!!10, draw]
    \tikzstyle{state} = [circle, fill=white, draw]

    \node[state] (s1) {\small $s_{i+0}$};
    \node[state, below= 40pt of s1] (gs1) {\small $\bar{s}_{j+0}$};
    \node[ststate, right of = gs1] (gsst1) {\small $\bar{s}_{j+1}$};
    \node[ststate, right of = gsst1] (gsst2) {\small $\bar{s}_{j+2}$};
    \node[state, right of = gsst2] (gs2) {\small $\bar{s}_{j+3}$};
    \node[state, right of = gs2] (gs3) {\small $\bar{s}_{j+4}$};

    \node[right of = gs3] (rdot2) {\small \dots};
    \node[left of = gs1] (ldot2) {\small \dots};
    \node[left = 10pt of ldot2] (globlabel) { $\pist$ };
    
    \node[above =-1pt of gs1] (gas1) {\small $a,\run$};
    \node[right of =gas1] (gasst1) {\small $\neg a, \neg \run$};
    \node[right of =gasst1] (gasst2) {\small $\neg a, \neg\run$};
    \node[right of = gasst2] (gas2) {\small $a, \run$};
    \node[right of = gas2] (gas3) {\small $\neg a, \run$};

    \draw[->] (ldot2) edge[bend left] node[below]{} (gs1);
    \draw[->] (gs1) edge[bend left] node[below]{} (gsst1);

    \draw[->] (gsst1) edge[bend left] node[below]{} (gsst2);

    \draw[->] (gsst2) edge[bend left] node[below]{} (gs2);

    \draw[->] (gs2) edge[bend left] node[below]{} (gs3);

    \draw[->] (gs3) edge[bend left] node[below]{} (rdot2);

    \node[state, above=40pt of gs2] (s2) {\small $s_{i+1}$};

    \node[state, above=40pt of gs3] (s3) {\small $s_{i+2}$};

    \node[right of = s3] (rdot) {\small \dots};
    \node[left of = s1] (ldot) {\small \dots};
    \node[left = 10pt of ldot] (loclabel) { $\pi$ };
    
    \node[above =-1pt of s1] (as1) {\small $a$};
    \node[above = -1pt of s2] (as2) {\small $a$};
    \node[above = -1pt of s3] (as3) {\small $\neg a$};

    \draw[->] (ldot) edge[bend left] node[below]{} (s1);
    \draw[->] (s1) edge[bend left] node[below]{} (s2);

    \draw[->] (s2) edge[bend left] node[below]{} (s3);

    \draw[->] (s3) edge[bend left] node[below]{} (rdot);

    \path (s2.south west)
         edge[decorate,decoration={brace,mirror,raise=.15cm},"\small$a$"below=6pt]
          (s2.south west -| s2.south east);

    \path (s1.south west)
         edge[decorate,decoration={brace,mirror,raise=.15cm},"\small$\nxt a$"below=6pt]
          (s1.south west -| s1.south east);

    \path (gsst1.south west)
         edge[decorate,decoration={brace,mirror,raise=.15cm},"\small$\neg \run \vee a$"below=6pt]
          (gsst1.south west -| gs2.south east);

    \path (gs2.south west)
          edge[decorate,decoration={brace,mirror,raise=.30cm},"\small$\run \wedge a$"below=10pt]
          (gs2.south west -| gs2.south east);
    \path (gsst1.south west)
          edge[decorate,decoration={brace,mirror,raise=.45cm},"\small$\state \rel (\neg \state \vee a)$"below=15pt]
          (gsst1.south west -| gsst1.south east);
    \path (gs1.south west)
          edge[decorate,decoration={brace,mirror,raise=.15cm},"\small$\Rm(\nxt a)$"below=6pt]
          (gs1.south west -| gs1.south east);
  \end{tikzpicture}
  \caption{Graphical representation of rewriting of $\nxt a$. $\pi$ represents the
  local trace while $\pist$ represents the trace of the composition. White states
  are states of local trace while pink states are states in which the local component is not running.
  In this example, $a$ is an input variable which happens to be true in state $s_i$
  and $s_{i+1}$ of local trace $\pi$ and state $\bar{s}_j$. To show the intuition
  of the rewriting, we show that Release operator permits to skip the pink states
  (which are not relevant w.r.t. the local trace). Finally, at state $\bar{s}_{j+3}$
  $a$ is evaluated since $\run$ is true.
  }
  \label{fig:rewr_x}
\end{figure}

\begin{figure}
  \begin{tikzpicture}[>=latex,font=\sffamily, node distance=50pt]
    \tikzstyle{ststate} = [circle, fill=pink!!10, draw]
    \tikzstyle{state} = [circle, fill=white, draw]

    \node[state] (s1) {\small $s_{i+0}$};
    \node[state, below= 50pt of s1] (gs1) {\small $\bar{s}_{j+0}$};
    \node[ststate, right of = gs1] (gsst1) {\small $\bar{s}_{j+1}$};
    \node[ststate, right of = gsst1] (gsst2) {\small $\bar{s}_{j+2}$};
    \node[state, right of = gsst2] (gs2) {\small $\bar{s}_{j+3}$};
    \node[state, right of = gs2] (gs3) {\small $\bar{s}_{j+4}$};

    \node[right of = gs3] (rdot2) {\small \dots};
    \node[left of = gs1] (ldot2) {\small \dots};
    \node[left = 10pt of ldot2] (globlabel) { $\pist$ };
    
    \node[above =-1pt of gs1] (gas1) {\small $a,\neg b,\run$};
    \node[right of =gas1] (gasst1) {\small $\neg a, b, \neg \run$};
    \node[right of =gasst1] (gasst2) {\small $\neg a, \neg b\neg\run$};
    \node[right of = gasst2] (gas2) {\small $a, \neg b, \run$};
    \node[right of = gas2] (gas3) {\small $\neg a, b, \run$};

    \draw[->] (ldot2) edge[bend left] node[below]{} (gs1);
    \draw[->] (gs1) edge[bend left] node[below]{} (gsst1);

    \draw[->] (gsst1) edge[bend left] node[below]{} (gsst2);

    \draw[->] (gsst2) edge[bend left] node[below]{} (gs2);

    \draw[->] (gs2) edge[bend left] node[below]{} (gs3);

    \draw[->] (gs3) edge[bend left] node[below]{} (rdot2);

    \node[state, above= 50pt of gs2] (s2) {\small $s_{i+1}$};

    \node[state, above= 50pt of gs3] (s3) {\small $s_{i+2}$};

    \node[right of = s3] (rdot) {\small \dots};
    \node[left of = s1] (ldot) {\small \dots};
    \node[left = 10pt of ldot] (loclabel) { $\pi$ };
    
    \node[above =-1pt of s1] (as1) {\small $a, \neg b$};
    \node[above = -1pt of s2] (as2) {\small $a, \neg b$};
    \node[above = -1pt of s3] (as3) {\small $\neg a, b$};

    \draw[->] (ldot) edge[bend left] node[below]{} (s1);
    \draw[->] (s1) edge[bend left] node[below]{} (s2);

    \draw[->] (s2) edge[bend left] node[below]{} (s3);

    \draw[->] (s3) edge[bend left] node[below]{} (rdot);

    \path (s3.south west)
         edge[decorate,decoration={brace,mirror,raise=.15cm},"\small$b$"below=6pt]
          (s3.south west -| s3.south east);

    \path (s1.south west)
         edge[decorate,decoration={brace,mirror,raise=.15cm},"\small$a$"below=6pt]
          (s2.south west -| s2.south east);

    \path (s1.south west)
         edge[decorate,decoration={brace,mirror,raise=.30cm},"\small$a \unt b$"below=10pt]
          (s1.south west -| s1.south east);
    \path (gs1.south west)
         edge[decorate,decoration={brace,mirror,raise=.15cm},"\small$\neg \state \vee a$"below=6pt]
          (gs1.south west -| gs2.south east);

    \path (gs3.south west)
          edge[decorate,decoration={brace,mirror,raise=.15cm},"\small$\state \wedge b$"below=6pt]
          (gs3.south west -| gs3.south east);
    \path (gs1.south west)
          edge[decorate,decoration={brace,mirror,raise=.30cm},"\small$\Rm(a \unt b)$"below=10pt]
          (gs1.south west -| gs1.south east);
  \end{tikzpicture}
  \caption{Graphical representation of rewriting of $\unt$. In this example both $a$ and $b$
  variables are input variables. Local trace $\pi$ satisfies $a\unt b$ at position $i$ since
  $a$ is true at state $s_i, s_{i+1}$ while $b$ is true at state $s_{i+2}$. 
  The corresponding global trace contains 2 additional states ($\bar{s}_{j+1},\bar{s}_{j+2}$)
  which are not considered in the rewriting since $\neg\state$ holds these 2 states.
  Finally, the state $\bar{s}_{j+4}$} of the global trace satisfies both $\state$ and $b$.
  \label{fig:rewr_u}
\end{figure}
\begin{lem}
  \label{thm:l1}
  For all $\pist \in \St(\pi)$, for all $i < |\pi|$:
  \begin{align*}
    &\pi, i \mts \varphi \Leftrightarrow \pist, \map_i \mts \R^{sgn}(\varphi)
  \end{align*}
\end{lem}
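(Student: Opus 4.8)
The plan is to prove Lemma~\ref{thm:l1} by structural induction on the formula $\varphi$ (simultaneously with an analogous claim on terms $u$), where for each $\varphi$ we prove the biconditional for both signs $sgn \in \{-,+\}$ at once, since the negation case swaps the sign. The key fact to exploit throughout is the correspondence between the local trace $\pi = \pr(\pist)$ and the global trace $\pist$ encoded by the map sequence $\map_0, \map_1, \dots$: the states $\pist(\map_0), \pist(\map_1), \dots$ restricted to $\V_i$ agree with $\pi(0), \pi(1), \dots$, and between consecutive indices $\map_k$ and $\map_{k+1}$ the component stutters (so $\neg\run$ holds and, by $\fcond^{\M_i}$, the output variables are frozen). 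I also need to unpack the auxiliary predicate $\state := \run \vee (\z\run \wedge \e)$: I expect $\pist, j \models \state$ to hold exactly when $j$ is one of the $\map_k$ \emph{or} $j$ is the successor of the last $\map_k$ (this second disjunct is what accounts for the final state of a finite local trace whose last index is $|\pi|-1$, the one point where $\map$ is extended by $\map_{k-1}+1$).

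First I would establish the base cases. For an output predicate $\pred^O$, the rewriting is the identity and the claim reduces to $\pi(i)$ and $\pist(\map_i)$ agreeing on $\Vo_i$ — immediate from the projection definition, and at $i = |\pi|-1$ the weak/strong conventions on predicates at the end of the trace have to be checked against whether $\map_i$ is or is not the last meaningful index of $\pist$; here the $\y\e$ and $\state$ machinery is not yet involved so it is a direct unfolding. For an input predicate $\pred^I$, the subtlety flagged in the prose kicks in: strong semantics evaluates $\pred^I$ as false at the end of the local trace, so $\Rp(\pred^I)$ conjoins $\run$ (forcing $\map_i$ to be a genuine transition point, i.e. $i < |\pi|-1$), and weak semantics disjoins $\neg\run$; I would check both polarities against the four-way case split in the semantics definition of $\pred^I$ with $\mtn$ and $\mtp$.

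Then the inductive step. Negation and disjunction are routine given that the induction hypothesis is stated for both signs. The interesting cases are the temporal ones. For $\nxt$: I would show that $\pist, \map_i \models \nxt(\state \rel (\neg\state \vee \Rm(\varphi)))$ holds iff, starting from $\map_i + 1$, the first state satisfying $\state$ — which is exactly $\map_{i+1}$ — satisfies $\Rm(\varphi)$, i.e. iff $\pist, \map_{i+1} \mtn \R^-(\varphi)$, which by IH is $\pi, i+1 \mtn \varphi$; the dual with $\unt$ in $\Rp$ is symmetric. Care is needed when $\map_i = |\pi|-1$ so that there is no ``next'' local state: here weak-$\nxt\varphi$ holds vacuously at $\pi$'s end, and on the global side $\state$ holds at $\map_i$ and everywhere $\z\run \wedge \e$ persists, so $\state \rel(\dots)$ is satisfied — this boundary sub-case is where I expect to spend the most care. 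For $\unt$ (and dually $\since$, $\y$): the rewriting relativizes the left argument with $\neg\state \vee \cdot$ to skip stuttering states and the right argument with $\state \wedge \cdot$ to stop only at local states, and the weak version adds $\y\e$ as an alternative witness accounting for the local trace ending; I would match the existential witness $k$ in the semantics of $\varphi_1 \unt \varphi_2$ against the global witness $\map_k$, using the $\state$-characterization to argue that global indices strictly between $\map_j$ and $\map_{j+1}$ are harmlessly absorbed by $\neg\state$. Finally the term cases ($\fun$, $x$, $c$, $\ite$, $next$, $\atn$, $\atl$) ride on the same $\state$-versus-$\map$ correspondence, with $next(u)$ becoming $\Rs(u)\atn(\state)$ because the local ``next'' is the next $\state$-point globally.

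The main obstacle I anticipate is the boundary bookkeeping around the last index of a finite local trace: getting the predicate $\state$ to correctly identify the single ``successor of the last $\run$'' state, making sure $\map_{|\pi|-1}$ and $\map_{|\pi|-1}+1$ are handled so that the weak-semantics conventions (predicates true, $\y$ true, $\nxt$ vacuous) line up on both sides, and verifying that the extra $\y\e$ disjunct in $\Rm(\varphi_1 \unt \varphi_2)$ fires exactly when the local $\unt$ is satisfied only by running past the end of the trace. Everything else is a fairly mechanical induction once the projection/map dictionary and the semantics of $\state$ are set up cleanly, so I would state and prove a small auxiliary claim characterizing $\{ j : \pist, j \models \state\}$ in terms of the $\map_k$ before doing the main induction.
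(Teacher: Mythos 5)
Your plan follows essentially the same route as the paper's proof: a structural induction over formulas (and terms) handling both signs simultaneously, preceded by auxiliary observations characterizing the $\state$/$\run$ points via the $\map_k$ sequence (including the successor-of-last-$\run$ state and the $\y\e$ end-of-trace marker), with the same treatment of the input/output predicate base cases and the skip/stop arguments for $\nxt$ and $\unt$. The boundary bookkeeping you flag is exactly what the paper's preliminary facts (its observations on $\run$, $\state$, and $\y\e$) are set up to discharge, so the proposal is correct and matches the paper's argument.
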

\begin{proof}
  We prove Lemma \ref{thm:l1} by induction on the formula. The high level intuition
  is that, assuming that we are in a point $\map_i$ of the trace, we can ``reach''
  the corresponding $\map_k$ using the syntactically rewriting for each temporal operator.
  Two graphical examples of that are given in Figure~\ref{fig:rewr_x} and Figure~\ref{fig:rewr_u}
  for respectively $\nxt$ and $\unt$.

  Before starting the proof we observe the following facts:
  \begin{enumerate}
    \item For all $i: i < |\pi| -1 \Leftrightarrow \pist, \map_i \mtp \run$.
      \label{itm:run}
    \item For all $i: i < |\pi| \Leftrightarrow \pist, \map_i \mtp \state$\label{itm:state}.
    \item $\forall i < |\pi|, \forall \map_{i-1} < j \le \map_i: \pist, \map_i \mtn \psi
      \Leftrightarrow \pist, j \mtn \state \rel (\neg\state \vee \psi)$\label{itm:weak}. 
    \item $\forall i < |\pi|, \forall \map_{i-1} < j \le \map_i: \pist, \map_i \mtp \psi
      \Leftrightarrow \pist, j \mtp \neg \state \unt (\state \wedge \psi)$\label{itm:str}. 
    \item $\pist, \map_{|\pi| - 1} + 1\mts \y \e$ \label{itm:ye}.
  \end{enumerate}
  Base cases:
  \begin{itemize}
    \item $\pred^I$: $\pi, i \mtn \pred^I \Leftrightarrow i \ge |\pi| -1 \vee \pred^I
      \overset{\pr}{\Leftrightarrow} i \ge |\pi| - 1 \text{ or }
      \pist, \map_i \mtn \pred^I \overset{\ref{itm:run}}{\Leftrightarrow} \pist, \map_i 
      \mtn \neg \run \vee \pred^I$. The strong semantics case is identical.
    \vskip0.25\baselineskip 
    \item $\pred^O, x, c$: Trivial.
  \end{itemize}
  Inductive cases:
  \begin{itemize}
    \item $\neg$: Trivial in both cases. It follows the semantics definition
    \item $\vee, \pred^I, \pred^O, \fun$: Trivial.
    \item $\nxt \varphi:
      \pist, \map_i \mtn \R(\nxt \varphi) \Leftrightarrow \pist, \map_i + 1 \mtn \state \rel 
      (\neg \state \vee \R(\varphi)) \overset{\ref{itm:weak}}{\Leftrightarrow} \pist,$ $\map_{i+1
      } \mtn \R(\varphi)$ if $i < |\pi| - 1$ (By  ind. holds). 
      Otherwise, $i = |\pi| - 1$ and thus $\pist, \map_i + 1 \mtn \always \neg \run$ which
      implies $\pist, \map_i \models \R(\nxt \varphi)$ as expected by the weak semantics.
      We skip the strong case because it is similar.
      \vskip0.25\baselineskip 
    \item $\varphi_1 \unt \varphi_2:
      \pist, \map_i \mtn \Rm(\varphi_1 \unt \varphi_2) \Leftrightarrow \pist, \map_i \mtn
      (\state \vee \Rm(\varphi_1)) \unt (\neg \state \wedge \Rm(\varphi_2) \vee \y \e) \Leftrightarrow
      \exists k' \ge \map_i \text{ s.t. } (\pist, k' \mtn \Rm(\varphi_2) \text{ and }\pist, k' \mtn
      \state \text{ or } \pist, k'$ $\mtn \y \e) \text{ and } \forall \map_i \le j' < k': \pist, j' \mtn \R(\varphi_1)
      \text{ or } \pist, j' \mtn \state \overset{\ref{itm:state},\ref{itm:ye}}{\Leftrightarrow}
      \exists k \ge i \text{ s.t. } \pist, \map_k$ $\mtn \R(\varphi_2) \text{ and } k < |\pi| \text{ or } k = |\pi| \text{ and }
      \forall i \le j < k \pist, \map_j \mtn \R(\varphi_1) \overset{Ind. (k,j < |\pi|), \forall \phi: \pi, |\pi| \mtn \phi}{\Leftrightarrow}
      \exists k \ge i \text{ s.t. } \pi, k \mtn \varphi_2 \text{ and }
      \forall i \le j < k: \pi, j \mtn \varphi_1 \Leftrightarrow \pi, i \mtn \varphi_1 \unt \varphi_2$. The proof of the strong case is the same.
    \vskip0.25\baselineskip 
    \item $\y \varphi$: The case is specular to the case of $\nxt$. In this specific
      case, we can use $\run$ instead of $\state$ because we are assuming that
      $i < |\pi|$; therefore $i-1 < |\pi| - 1 \overset{\ref{itm:run}}\Rightarrow \pist, \map_{i-1} \mtn\run$ (if $i = 0$, then property is false).
      \vskip0.25\baselineskip 
    \item $\varphi_1 \since \varphi_2:$ The case is specular to $\unt$. It is sufficient
      to observe that \ref{itm:str} can be applied for the past as well.
      \vskip0.25\baselineskip 
    \item $ite(\varphi, \term_1, \term_2):
      \pist(\map_i)(\Rs(ite(\varphi, \term_1, \term_2))) = \pist(\map_i)(\Rm(\term_1))
      \text{ if } \pist, \map_i \mtp \varphi; \dots = \pist(\map_i)(\Rm(\term_2)) 
      \text{ if } \pist, \map_i \mtp \neg \varphi; \dots = \pist(def_{ite(\varphi, u_1, u_2)}) 
      \text{ otherwise}$. By induction the case holds.
      \vskip0.25\baselineskip 
    \item $\term \atn \varphi:
      \pist(\map_i)(\Rs(\term \atn \varphi)) = \pist(\map_i)(\Rm(\term) \atn (\state \wedge \Rp(\varphi)))$.
      By the semantics of the at next operator operator, we obtain the following.

      $\pist(\map_i)(\Rs(\term \atn \varphi)) = \pist(j)(\Rm(\term))$ if there exists $j > \map_i \text{ s.t. }
      \pist, j \mtp \state \text{ and } \pist, j \mtp \Rp(\varphi)$; $\pist(\map_i)(\dots) = def_{\term \atn \varphi}$ otherwise.
      $\pist(\map_i)(\dots) \overset{\ref{itm:state}}{=} \pist(\map_j)(\Rm(\term)) \text{ if }
      \exists j \ge i \text{ s.t. } \pist, \map_j \mtp \Rp(\varphi); \pist(\map_i)(\dots) = def_{\dots}$ otherwise. Finally, by induction hypothesis, the case is proved.
      \vskip0.25\baselineskip 
    \item $\term \atl \varphi:$ Identical to $\atn$.
    \item $next(\term)$: Identical to case $\term \atn \top$. \qedhere
  \end{itemize}
\end{proof}
\noindent 
Lemma \ref{thm:l1} states that each point in the sequence $\map_0, \dots$ 
of $\pist$ satisfies $\R(\varphi)$ iff the local property satisfies in that
point $\varphi$; therefore, providing a mapping between the two properties.

\begin{defi}
  We define $\R^*$ as $ \R^*(\varphi) := \state \rel (\neg \state \vee \R(\varphi))$.
\end{defi}

\begin{lem}
  \label{thm:l2}
  For all $\pist \in \St(\pi):
  \pist, \map_0 \mt \R(\varphi) \Leftrightarrow
  \pist, 0 \mt \R^*(\varphi)$
\end{lem}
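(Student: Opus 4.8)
The plan is to obtain Lemma~\ref{thm:l2} as the degenerate ($i=0$) instance of fact~\ref{itm:weak} already recorded in the proof of Lemma~\ref{thm:l1}. That fact says that for every $i < |\pi|$ and every index $j$ with $\map_{i-1} < j \le \map_i$ we have $\pist, j \mtn \state \rel (\neg\state \vee \psi) \Leftrightarrow \pist, \map_i \mtn \psi$. Since by definition $\R^*(\varphi) = \state \rel (\neg\state \vee \R(\varphi))$, instantiating $\psi := \R(\varphi)$ turns the left-hand formula into $\R^*(\varphi)$, so the whole lemma will follow once I check that $j = 0$ is a legal choice of index when $i = 0$.

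Concretely I would carry out three short steps. First, unfold the boundary value of $\map$: by the first clause of Definition~\ref{def:pr}, $\map_k = k$ for all $k < 0$, hence $\map_{-1} = -1$, so the admissible range $\map_{-1} < j \le \map_0$ is exactly $0 \le j \le \map_0$; and $\map_0 \ge 0$ always (it is the first position of $\pist$ at which $\run$ holds, or $0$ in the degenerate case where $\run$ never holds, in which case $\state$ still holds at $0$ because $\z\run \wedge \e$ does). Thus $j = 0$ is admissible. Second, apply fact~\ref{itm:weak} with $i = 0$, $j = 0$ and $\psi = \R(\varphi)$ to get $\pist, 0 \mtn \state \rel (\neg\state \vee \R(\varphi)) \Leftrightarrow \pist, \map_0 \mtn \R(\varphi)$. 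Third, replace $\state \rel (\neg\state \vee \R(\varphi))$ by $\R^*(\varphi)$ via its definition, which is exactly the claimed equivalence $\pist, 0 \mt \R^*(\varphi) \Leftrightarrow \pist, \map_0 \mt \R(\varphi)$.

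So at the level of Lemma~\ref{thm:l2} there is essentially no obstacle: it is a one-line corollary once the index bookkeeping above is in place. The substantive work sits inside fact~\ref{itm:weak}, whose only delicate point is the behaviour of $\state = \run \vee (\z\run \wedge \e)$ and of the Release operator at the end of a possibly finite trace $\pist$ --- one needs that $\neg\state$ holds (strongly) at every position strictly before $\map_0$ and that $\state$ holds (both weakly and strongly) at $\map_0$, so that $\state \rel (\neg\state \vee \psi)$ ``skips'' the non-local positions and is pinned down at $\map_0$; this is exactly what facts~\ref{itm:run} and~\ref{itm:state} provide. If one preferred a self-contained proof of Lemma~\ref{thm:l2}, that weak/strong case analysis at the last state would be the only nontrivial part, and I would structure it by the same ``the first $\state$-position is $\map_0$'' argument rather than re-examining the Release semantics from scratch.
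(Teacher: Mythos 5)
Your proposal is correct and is essentially the paper's own argument: the paper proves Lemma~\ref{thm:l2} precisely by invoking observation~\ref{itm:weak} from the proof of Lemma~\ref{thm:l1}, which you instantiate with $i=0$, $j=0$, $\psi=\R(\varphi)$ and then unfold the definition of $\R^*$. Your extra bookkeeping (that $\map_{-1}=-1$ makes $j=0$ an admissible index up to $\map_0$) just makes explicit what the paper leaves implicit.
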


\begin{proof}
  Proofs follows simply applying observation~\ref{itm:weak} of the proof of Lemma~\ref{thm:l1}.\qedhere
\end{proof}

Lemma \ref{thm:l1} shows that $\R$ guarantees that satisfiability is preserved 
in the active transitions of the global traces. However,
$\map_0$ is not always granted to be equal to $0$ (see definition~\ref{def:pr}), and thus,
the rewriting must guarantee that satisfiability is preserved in the first transition as well.
From lemma \ref{thm:l1} and lemma \ref{thm:l2}, we infer the rewriting theorem (Theorem~\ref{thm:rewr_thm}) which shows that $\R^*$ maps the local trace to the global trace as follows:
\begin{thm}
  \label{thm:rewr_thm}
  For all $\pist \in \St(\pi):
  \pi \mt \varphi \Leftrightarrow
  \pist \mt \R^*(\varphi)$
\end{thm}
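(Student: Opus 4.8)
The plan is to derive Theorem~\ref{thm:rewr_thm} by chaining Lemma~\ref{thm:l1} and Lemma~\ref{thm:l2}, after unfolding the top-level definition of $\mt$ and recalling from the definition of $\R$ that $\R(\varphi) = \R^-(\varphi)$. First I would note that every trace is non-empty, so $0 < |\pi|$ and Lemma~\ref{thm:l1} applies at $i = 0$ with the weak polarity $sgn = -$. Then the argument is just the equivalence chain
$$\pi \mt \varphi \;\Longleftrightarrow\; \pi, 0 \mtn \varphi \;\Longleftrightarrow\; \pist, \map_0 \mtn \R(\varphi) \;\Longleftrightarrow\; \pist, 0 \mtn \R^*(\varphi) \;\Longleftrightarrow\; \pist \mt \R^*(\varphi),$$
where the first equivalence is the definition of $\mt$, the second is Lemma~\ref{thm:l1} instantiated at $i=0$ with $sgn=-$ (using $\R = \R^-$), the third is Lemma~\ref{thm:l2}, and the fourth is again the definition of $\mt$. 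Since $\pist \in \St(\pi)$ was arbitrary, this proves the theorem.

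At this level there is essentially no obstacle: all the substantive work has already been carried out in the two lemmas, and what remains is only bookkeeping about the trivial side conditions (non-emptiness of $\pi$ so that $0 < |\pi|$; the syntactic identity $\R = \R^-$; and that $\mt$ abbreviates weak satisfaction at position $0$). It is worth emphasizing, though, where the difficulty actually lives. Lemma~\ref{thm:l1} is the genuinely hard step: its inductive proof must show that local satisfaction is mirrored exactly at the mapped positions $\map_i$ of the global trace, which requires the five auxiliary facts about $\run$, $\state$ and $\y\e$, the careful weak/strong treatment of input predicates on the last local state, and the "skip the stuttering states" behaviour of the $\state \rel (\neg\state \vee \cdot)$ and $\neg\state \unt (\state \wedge \cdot)$ patterns (cf.\ observations~\ref{itm:weak} and~\ref{itm:str}). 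Lemma~\ref{thm:l2} then handles the remaining mismatch, namely that $\map_0$ need not equal $0$ (Definition~\ref{def:pr}): the outermost wrapper $\R^*(\varphi) = \state \rel (\neg\state \vee \R(\varphi))$ transparently skips the — possibly empty — initial segment of $\pist$ during which $\M_i$ has not yet been scheduled, so that evaluating $\R^*(\varphi)$ at the global initial state coincides with evaluating $\R(\varphi)$ at $\map_0$.

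Concretely, I would therefore write the proof of Theorem~\ref{thm:rewr_thm} as a two-line corollary of the preceding lemmas, being explicit only about the instantiation $i = 0$, $sgn = -$ of Lemma~\ref{thm:l1} and the identification $\R = \R^-$, and leaving everything else to the already-established results.
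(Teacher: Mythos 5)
Your proposal is correct and follows essentially the same route as the paper's own proof: it chains Lemma~\ref{thm:l1} (instantiated at $i=0$ with the weak polarity) with Lemma~\ref{thm:l2}, after unfolding $\mt$ as weak satisfaction at position $0$. The only additions are harmless bookkeeping remarks (non-emptiness of $\pi$, the identity $\R = \R^-$), so nothing needs to change.
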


\begin{proof}
  We prove Theorem~\ref{thm:rewr_thm} through Lemma \ref{thm:l1} and Lemma \ref{thm:l2}:

  By Lemma \ref{thm:l1}, $\forall i: \pi, i \mts \varphi \Leftrightarrow \pist, \map_i \mts \Rs(\varphi)$. Therefore, $\pi \mtn \varphi \Leftrightarrow \pist, \map_0 \mtn \Rm(\varphi)$.
  By Lemma \ref{thm:l2}, $\pist, \map_0 \mts \Rm(\varphi) \Leftrightarrow \pist \mts \R^*(\varphi)$.
  Therefore, $\pi \mtn \varphi \Leftrightarrow \pist \mtn \R^*(\varphi)$.\qedhere
\end{proof}

%

\begin{thm}
  \label{thm:size}
  Let $\varphi$ be a truncated LTL formula:
  \begin{enumerate}
    \item If $\varphi$ does not contain $\ite$, the size of the rewritten formula
  with $\R^*$ is linear w.r.t. $\varphi$ i.e. $|\R^*(\varphi)| = O(|\varphi|)$.
    \item If $\varphi$ contains $\ite$, the size of the rewritten formula is in
      the worst-case exponentially larger than $\varphi$.
  \end{enumerate}
\end{thm}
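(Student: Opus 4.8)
The plan is to analyze the rewriting $\R^*(\varphi) := \state \rel (\neg \state \vee \R(\varphi))$ by structural induction on $\varphi$, tracking how the size grows through each clause of the definition of $\R$. First I would define $|\varphi|$ in the standard way (number of nodes in the syntax tree, counting terms and formulas together) and observe that $\state = \run \vee (\z \run \wedge \e)$ has constant size, so that $\R^*(\varphi)$ is bigger than $\R(\varphi)$ only by a constant additive term. Thus it suffices to bound $|\R^{sgn}(\varphi)|$; and since $\R^-$ and $\R^+$ call each other symmetrically on strictly smaller subformulas, I would bound $\max(|\R^-(\varphi)|, |\R^+(\varphi)|)$ simultaneously by a single induction.

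For part (1), the key observation is that every clause of the $\R$ rewriting that does not involve $\ite$ replaces each node of $\varphi$ by a fixed-size gadget and recurses exactly once on each immediate subformula or subterm. Concretely: predicates expand to $\neg\run \vee \pred^I(\dots)$ or $\run \wedge \pred^I(\dots)$, adding a constant; $\neg$, $\vee$, $\unt$, $\since$, $\fun$, $x$, $c$, $next(u)$, $u\atn\varphi$, $u\atl\varphi$ each add a constant-size wrapper (the $\nxt$ and $\y$ cases add slightly more — a $\state \rel (\neg\state \vee \cdot)$ or $\neg\run \since (\run \wedge \cdot)$ frame — but still constant). So there is a constant $d$ with $|\R^{sgn}(\varphi)| \le |\R^{sgn}(\varphi_1)| + \dots + |\R^{sgn}(\varphi_k)| + d$ where $\varphi_1,\dots,\varphi_k$ are the immediate children; unfolding the recursion over the whole tree gives $|\R^{sgn}(\varphi)| \le d\cdot|\varphi|$, hence $|\R^*(\varphi)| = O(|\varphi|)$. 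I would write this as a one-line induction: assume $|\R^{sgn}(\varphi_j)| \le d|\varphi_j|$ for each child, then $|\R^{sgn}(\varphi)| \le \sum_j d|\varphi_j| + d \le d(\sum_j |\varphi_j| + 1) = d|\varphi|$.

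For part (2), the blow-up comes entirely from the clause $\Rs(ite(\varphi,\term_1,\term_2)) := ite(\Rp(\varphi), \Rm(\term_1), ite(\Rp(\neg\varphi), \Rm(\term_2), def_{\dots}))$: here the subformula $\varphi$ is rewritten \emph{twice} (once as $\Rp(\varphi)$, once inside $\Rp(\neg\varphi)$, and $\Rp(\neg\varphi) = \neg\Rm(\varphi)$, so $\varphi$ itself appears in two copies after one level). Nesting $\ite$ with a shared condition then doubles the formula at each level, giving a $2^{\Theta(n)}$ bound in the worst case; for the lower bound I would exhibit an explicit family such as $\varphi_0 := P$, $\varphi_{k+1} := ite(\varphi_k, \term_1, \term_2)$ — actually better, a family where the \emph{condition} is the large part, e.g.\ use a term $u_k$ built so that the condition subformula is duplicated each time — and check that $|\R^*(\varphi_k)|$ grows exponentially in $k = \Theta(|\varphi_k|)$. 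For the upper bound, the same recurrence as in part (1) but with the $\ite$ clause contributing $|\R^{sgn}(\varphi)| \le 2|\R^{sgn}(\varphi_1)| + |\R^{sgn}(\varphi_2)| + |\R^{sgn}(\varphi_3)| + d$ (the factor $2$ on the condition child, $1$ on the two branches) solves to $|\R^{sgn}(\varphi)| = O(c^{|\varphi|})$ for a constant $c$.

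The main obstacle I expect is the lower-bound construction for part (2): one must make sure the duplicated subformula is genuinely large (so that a naive DAG-size argument does not collapse the bound) and that no simplification of the rewritten formula avoids the exponential — i.e.\ the exponential is intrinsic to the syntax-tree size of $\R^*(\varphi)$ as defined, not merely an artifact. Everything else is a routine size-recurrence argument, so I would state the recurrences carefully, note that $\R^-$/$\R^+$ differ only by polarity and have the same size bound, and handle the $\ite$-free and $\ite$-containing cases by the two versions of the recurrence above.
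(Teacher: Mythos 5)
Your proposal matches the paper's own argument: part (1) is proved there by the same structural induction showing each clause of $\R^{sgn}$ adds only a constant-size wrapper per node (with $\state$ of constant size and $\R^*$ adding one more constant frame), and part (2) is justified, as you do, by the fact that the $\ite$ clause rewrites the condition twice (as $\Rp(\varphi)$ and inside $\Rp(\neg\varphi)=\neg\Rm(\varphi)$), so nested conditions blow up exponentially. Your sketch is in fact slightly more explicit than the paper on part (2) — the paper gives only the one-sentence duplication remark, without the recurrence or a witness family — but the underlying approach is the same.
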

\begin{proof}
  We prove Theorem~\ref{thm:size} by first showing that $\size{\R(\varphi)} = \size{\varphi} + c$ where $c$ is a constant inductively on the structure of the formula.
  The base case trivially holds since the $\Rs(x)=x$ and $\Rs(c)=c$. We now show
  the other cases assuming that the theorem holds on the sub-formulae. For brevity,
  we prove only the weak part of the rewriting; the prove of the strong part is
  identical since the sizes of the generated formulae are the same.
  \begin{itemize}
    \item ($\pred$) $\size{\Rm(\pred^I(\term_1,\dots, \term_n))} = 
      \size{\neg \run \vee \pred^I(\Rm(\term_1),\dots,\Rm(\term_n))} = 
      3 + \sum_{1 \le i \le n} \size{\term_i}$ $+ c_i = 2 + \sum_{1 \le i \le n} c_i +
      1 + \sum_{1 \le i \le n} \size{\term_i} = 2 + \sum_{1 \le i \le n} c_i + \size{\pred^I}(\term_1, \dots, \pred_n)$. Since $c_i$ are constants the rewriting is linear
      in this case. The proof for output predicate is almost identical.
    \item ($\vee,\neg$) Trivial.
    \item ($\nxt$) $\size{\Rm(\nxt \varphi)} = \size{\nxt (\state \rel (\neg \state
      \vee \Rm(\varphi)))} = 1 + 2 \size{\state} + 2 + \size{\Rm(\varphi)} =
      3 + \size{\state} + \size{\varphi} + c$. $\size{\state}$ and $c$ are constants;
      therefore, the rewriting is still linear.
      \vskip0.25\baselineskip 
    \item ($\unt$) $\size{\Rm(\varphi_1 \unt \varphi_2)} = \size{(\neg \state \vee
      \Rm(\varphi_1)) \unt (\state \wedge \Rm(\varphi_2))} = 
      3 + 2\size{\state} + 1 + \size{\varphi_1} + c_1 + \size{\varphi_2} + c_2 =
      \size{\varphi_1 \unt \varphi_2} + c_1 + c_2 + 3 + 2\size{\state}$. $\size{\state}$,
      $c_1$ and $c_2$ are constants; therefore, the rewriting is still linear.
    \item ($\since,\y$) The proof is respectively as $\unt$ and $\nxt$.
    \item ($\fun$) Trivial.
    \item ($\atn$) $\size{\Rs(\term \atn \varphi)} = \size{\Rs(\term) \atn (\state \wedge
      \Rp(\varphi))} = \size{\term} + c_1 + 1 + \size{\state} + 1 + \size{\varphi} + c_2
      = \size{\term \atn \varphi} + c_1 + c_2 + \size{\state} + 1$. Since $c_1$ (constant of $\Rs(\term)$), $c_2$ (constant of $\Rs(\varphi)$) and $\state$ are constants; then
      the rewriting is linear.
    \item ($\atl$, $next$) Identical to $\atn$.
  \end{itemize}
  Finally, $\size{\R^*(\varphi)} = \size{\state \rel (\neg \state \vee \Rm(\varphi))}
  = \size{\Rm(\varphi)} + 2\size{\state} + 3$; since $\size{\Rm(\varphi)}$ is linear
  w.r.t $\size{\varphi}$ we deduce that $\size{\R(\varphi)}$ is linear as well.

Finally, the size of the rewriting is worst-case exponential
when considering $\ite$ because $\Rs(\ite(\varphi, \term_1, \term_2))$ contains,
  both weakly and strongly $\R(\varphi)$, the rewriting of $\varphi$.\qedhere
\end{proof}

\begin{example}
  Consider the formula $\varphi_{c2} := \always (rec_2 \rightarrow out_2' = in_2 \wedge \nxt send_2)$ from Figure \ref{fig:ex}.
  The rewriting $\R^*(\varphi_{c2})$ is defined as the following formula.
  $$\always (\neg \state \vee (\underbrace{\run \wedge rec_2}_{\Rp(rec_2)} \rightarrow
  \overbrace{(out_2 \atn state = in_2)}^{\Rm(next(out_2))} \wedge \underbrace{\nxt (\run \rel (\neg \run \vee send_2))}_{\Rm(\nxt send_2)}))$$
  Finally, $\R^*(\varphi_{c2})$ is defined as $state \rel (\neg state \vee \Rm(\varphi_{c2})$.

  Recall that $\always$ is an abbreviation of Until: $\always \psi := \neg (\top \unt \neg\psi)$.
  Therefore, the rewriting of $\always$ is equal to $\neg\Rp(\top \unt \neg\psi) :=
  \neg ((\neg \state \vee \top) \unt (\state \wedge \Rp(\neg\psi)))$; we can simplify
  the left side of until and we obtain $\neg (\top \unt (\state \wedge \Rp(\neg \psi)))$; and,
  with further simplifications we obtain $\neg \future (\state \wedge \neg \Rm(\psi)) \equiv
  \always (\neg \state \vee \Rm(\psi))$. Intuitively, with the always modality, we
  are interested in evaluating the states that are local by evaluating as true
  any state in which the component stutters.

  For what regards $rec_2$, since it is in the left side of an implication, we rewrite it
  with strong semantics by asking for $\run$ to be true. Then, $next(out_2)$ is rewritten
  via at-next operator; the intuition is that at-next provides the ``next'' value
  of the variable $out_2$ if $\state$ will be true, otherwise a default value is given.
  For what regards next, the intuition is given by Figure~\ref{fig:rewr_x}.
\end{example}

\subsection{Optimized LTL compositional rewriting}
\label{sec:trr}
The main weakness of the rewriting proposed in previous sections is the size of
the resulting formula. However, there are several cases in which it is possible
to apply a simpler rewriting. For instance, $\always v_o$ is rewritten by $\Rs$
into $\always (\neg \state \vee v_o)$ while by $\fcond$ (see Definition~\ref{def:symbasynccomp}) 
it does not need to be rewritten since output variables do not change when $\run$ is false.
Similarly, $\nxt v_o$ can be rewritten in the weak semantics to $\e \vee \nxt v_o$.

To do so, we apply the concept of \textit{stutter-tolerance} introduced in \cite{nfm22} tailored
for possibly finite traces. Informally, a formula is said \textit{stutter-tolerant}
if it keeps the same value when rewritten with $\Rs$ in all adjacent stuttering transitions.

\begin{defi}
  \label{def:sttol}
  An LTL formula $\varphi$ and a term $u$ are respectively said \textit{stutter-tolerant} 
  w.r.t. $\Rs$ iff:\\
    $\text{For all } \pi, \text{for all } \pist \in \St(\pi), \text{for all } 0 \le i < |\pi|: 
    \text{for all }{\map_{i-1} < j < \map_i}:$
    \begin{align*}
      \pist, j \mts \Rs(\varphi) \Leftrightarrow& \pist, \map_i \mts \Rs(\varphi) \text{ and }\\
      \pist(j)(\R^{sgn}(u)) =& \pist(\map_i)(\R^{sgn}(u))
    \end{align*}
\end{defi}

\begin{defi}
  \label{def:syntst}
  An LTL formula $\varphi_{st}$ is \textit{syntactically stutter-tolerant}---abbreviated as synt.st.tol.---iff it has the following grammar:
  \begin{align*}
    \varphi_{st} :=&\varphi_{st} \vee \varphi_{st} \mid \neg \varphi_{st} \mid
  \pred^O(u, \dots, u)  \mid \varphi \unt \varphi \mid
  \y \varphi\\
    u_{st} := &\fun(u_{st}, \dots, u_{st}) \mid s \mid c \mid ite(\varphi_{st}, u_{\st}, u_{st})
  \mid u \atl \varphi
  \end{align*}
  where $\varphi$ is an LTL formula, $u$ is a term from LTL syntax, $s$ is an
  output variable and $c$ is a constant.
\end{defi}

\begin{lem}
  \label{thm:syntst}
  Syntactically stutter-tolerant formulas are stutter-tolerant w.r.t. $\Rs$
\end{lem}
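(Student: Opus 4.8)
The plan is to prove the lemma by structural induction on the grammar of Definition~\ref{def:syntst}, establishing simultaneously the weak ($sgn=-$) and the strong ($sgn=+$) instances of both equivalences in Definition~\ref{def:sttol}. So fix $\pi$, a trace $\pist \in \St(\pi)$, an index $0 \le i < |\pi|$, and a ``stuttering'' position $j$ with $\map_{i-1} < j < \map_i$. First I would record the facts that hold at such a $j$, nearly all of which are already isolated in the proof of Lemma~\ref{thm:l1}: since $\map_i$ is the first position after $\map_{i-1}$ at which $\run$ holds, $\run$ is false at $j$; since $\run$ holds at the later position $\map_i$, both $\e$ and hence $\state = \run \vee (\z\run \wedge \e)$ are false at $j$, and $\y\e$ is false at $j$ as well; by observation~\ref{itm:state}, $\state$ holds at $\map_i$; and, by the freezing conjunct $\fcond^{\M_i}$ built into $\otimes$, every output variable of $\M_i$ keeps the same value along $j, j+1, \dots, \map_i$, in particular at $j$ and at $\map_i$.

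Next I would dispatch the cases where the rewriting $\R$ is transparent. For $\varphi_{st,1} \vee \varphi_{st,2}$ and $\neg\varphi_{st}$ the rewriting commutes with the connective (using $\Rm(\neg\psi)=\neg\Rp(\psi)$ and $\Rp(\neg\psi)=\neg\Rm(\psi)$), so the values at $j$ and $\map_i$ agree by the induction hypothesis applied to the subformulas for both signs. For the term constructors $s$, $c$, $\fun(\dots)$ and $ite(\varphi_{st}, u_{st,1}, u_{st,2})$ the rewriting is either the identity or commutes with the constructor: output variables are handled by the freezing fact above, constants are immediate, and the $ite$ case additionally uses the hypothesis for $\varphi_{st}$ (and hence for $\neg\varphi_{st}$) on the guards $\Rp(\varphi_{st})$ and $\Rp(\neg\varphi_{st})$. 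An output predicate reduces, through $\Rs(\pred^O(\cdot)) = \pred^O(\Rs(\cdot))$, to the term cases, the only extra point being that the ``end-of-trace'' clause in the semantics of $\pred^O$ evaluates the same at $j$ and at $\map_i$ because $\map_i$ lies within $\pist$.

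The genuinely temporal work is in the cases $\varphi_1 \unt \varphi_2$, $\y\varphi$ (for \emph{arbitrary} LTL $\varphi_1, \varphi_2, \varphi$) and the companion term $u \atl \varphi$. For $\unt$, I would observe that in $\Rm(\varphi_1 \unt \varphi_2) = (\neg\state \vee \Rm(\varphi_1)) \unt ((\state \wedge \Rm(\varphi_2)) \vee \y\e)$ the left argument is true and the right argument false at \emph{every} position of the stuttering block $(\map_{i-1}, \map_i)$ (there $\state$ and $\y\e$ both fail); consequently any witness for the Until starting at $j$ must lie at or beyond $\map_i$, while the left argument holds on all of $[j, \map_i)$, so the Until holds at $j$ iff it holds at $\map_i$. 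The strong rewriting, which merely drops the $\y\e$ disjunct, is treated the same way. For $\y\varphi$, the rewriting $\Rs(\y\varphi) = \y(\neg\run \since (\run \wedge \Rs(\varphi)))$ makes the inner ``since'' jump back to the most recent $\run$-position and test $\Rs(\varphi)$ there; evaluated at $j-1$ and at $\map_i - 1$, that position is $\map_{i-1}$ in both cases (as $\run$ is false strictly between $\map_{i-1}$ and $\map_i$), so the two evaluations coincide and therefore so do the values of $\y(\cdot)$ at $j$ and at $\map_i$. The $\atl$ term is analogous: $\state \wedge \Rp(\varphi)$ can only hold at $\map$-positions, and the backward scan from $j$ and from $\map_i$ crosses exactly the same $\state$-false positions before reaching $\map_{i-1}$, hence it selects the same source index (or the same default value).

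The step I expect to be the main obstacle is the bookkeeping at the boundaries: when $i = 0$ one must use $\map_{-1} = -1$ together with the weak behaviour of $\y$ at index $0$; when $i = |\pi| - 1$ one must check that $\map_i$ still plays the role of a genuine position for the $\state$/$\run$/$\y\e$ facts and for the $\pred^O$ end-of-trace clause; and in the $\unt$ and $\atl$ arguments one must verify that the stuttering block is really transparent to the temporal scan, i.e.\ that no right-hand witness (resp.\ no $\state \wedge \Rp(\varphi)$ point) is hidden inside it --- which again follows because $\state$ is false throughout that block. None of this requires machinery beyond the observations collected in the proof of Lemma~\ref{thm:l1} plus a careful tracking of which positions are $\map$-positions.
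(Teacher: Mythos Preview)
Your proposal is correct and follows essentially the same structural-induction approach as the paper's proof; the only stylistic difference is that the paper argues the $\unt$ (and dually $\y$) case by one-step expansion of the fixpoint equation together with an inner induction on $j$, whereas you reason directly with the semantic witness, observing that on the stuttering block the right side of the rewritten Until is false and the left side is true. Both arguments amount to the same observation that the block $(\map_{i-1},\map_i)$ is transparent to the rewritten temporal operators, and your treatment of the term cases ($ite$, $\atl$) and of the boundary positions is in fact more explicit than the paper's.
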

\begin{proof}
  We prove the Lemma by induction on the size of the formula.
  
  The base case is trivial since output variables remain unchanged during stuttering.
  The inductive case is proved as follows:
  \begin{itemize}
    \item $\vee, \pred^O$ and $\neg$: Trivial.
    \item $\unt$:
      We can prove the correctness by induction on $j$, with base case $j = \map_i - 1$.
      $\pist, j \mtn (\neg \state \vee \Rm(\varphi_1)) \unt (\state \wedge \Rm(\varphi_2) \vee \e) \Leftrightarrow
      \pist, j \mtn (\Rm(\varphi_2) \wedge \state \vee \e) \vee (\neg \state \vee \Rm(\varphi_1)) \wedge \nxt (
      (\neg \state \vee \Rm(\varphi_1) \unt (\state \wedge \Rm(\varphi_2))) \overset{\pist, j \notmt \state}{\Leftrightarrow}
      \pist, j \mtn \nxt ((\Rm(\varphi_1) \vee \neg \state) \unt (\state \wedge \Rm(\varphi_2) \vee \e) \overset{j < |\pist| - 1}{\Leftrightarrow} \pist, j + 1 \mtn \Rm(\varphi_1 \unt \varphi_2)$
      which is $\map_i$ for the base case. The inductive case follows trivially.
    \item $\y$: The case of $\y$ can be prove in the same way of $\unt$ by expanding $\since$ instead of $\unt$.
    \item $\atl$: The semantics of at last evaluates $\Rp(\varphi)$ at the first occurrence in
      the past of $\state$. The proof is the same as $\y$.\qedhere
  \end{itemize}
\end{proof}
\noindent 
From Lemma~\ref{thm:syntst}, we derive a syntactical way to determine identify
a relevant fragment of stutter tolerant formulae. Since this definition is purely
syntactical, it is very simple for an algorithm to determine whether or not a
formula is syntactically stutter tolerant; it is sufficient to traverse the structure
of the formula and look at the variables and operators.

From the notion of syntactically stutter tolerant formula, we provide an optimized rewriting
that is semantically equivalent to $\Rs$. If the sub-formulas of $\varphi$
are syntactically stutter tolerant, we can simplify the rewriting for $\varphi$.

\begin{defi}
  \label{def:optrewr}
  We define $\Ropt$ as follows. We omit the cases that are identical to $\R$.
  \begin{align*}
    &\Roptm(\nxt \varphi) := \e \vee \nxt \Roptm(\varphi) \text{ If }
    \varphi \text{ is syntactically stutter-tolerant}\\
    &\Roptp(\nxt \varphi) :=
        \neg \e \wedge \nxt \Roptp(\varphi) \text{ If } \varphi \text{ is syntactically stutter-tolerant}\\
    &\Roptm(\varphi_1 \unt \varphi_2) :=
        \Roptm(\varphi_1) \unt (\y \e \vee \Roptm(\varphi_2)) \text{ If } \varphi_1,
        \varphi_2 \text{ are syntactically stutter-tolerant}\\
    &\Roptp(\varphi_1 \unt \varphi_2) :=
        \Roptp(\varphi_1) \unt (\neg \y \e \wedge \Roptp(\varphi_2)) \text{ If } \varphi_1,
        \varphi_2 \text{ are syntactically stutter-tolerant}\\
    &\Ropts(\term \atn \varphi) :=
          \Ropts(\term) \atn (\Ropts(\varphi) \wedge \neg \e) \text{ If } \varphi \text{ is syntactically stutter-tolerant}
  \end{align*}

  where $\Ropt(\varphi) := \Roptm(\varphi)$.
\end{defi}

\begin{lem}
  \label{thm:l1o}
  For all $\pi$, for all $\pist \in \St(\pi)$, for all $i < |\pi|$:
  \begin{align*}
    &\pi, i \mts \varphi \Leftrightarrow
    \pist, \map_i \mts \Ropt(\varphi)
    &\pi(i)(u) = \pist(\map_i)(\Ropt(u))
  \end{align*}
\end{lem}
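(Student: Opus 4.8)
The plan is to prove Lemma~\ref{thm:l1o} by structural induction on $\varphi$ (and simultaneously on terms $u$), mirroring the proof of Lemma~\ref{thm:l1} but now having to account for the optimized clauses of Definition~\ref{def:optrewr}. For every syntactic case in which $\Ropt$ coincides with $\R$, the statement follows verbatim from Lemma~\ref{thm:l1} together with the induction hypothesis; so the real work is confined to the five optimized cases for $\nxt$, $\unt$ and $\atn$, and each of these is only triggered when the relevant sub-formula(s) are syntactically stutter-tolerant, hence by Lemma~\ref{thm:syntst} genuinely stutter-tolerant w.r.t.\ $\Rs$ (Definition~\ref{def:sttol}). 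I would first restate the five auxiliary facts \ref{itm:run}--\ref{itm:ye} from the proof of Lemma~\ref{thm:l1}, since they are reused unchanged, and I would add one more observation that will do most of the heavy lifting: by stutter-tolerance, whenever $\varphi'$ is syntactically stutter-tolerant, the value of $\Rs(\varphi')$ is constant on the whole block of indices $\{j : \map_{i-1} < j \le \map_i\}$, and likewise for $\Ropt(\varphi')$ once the sub-case induction hypothesis is invoked (here one also needs that $\Ropt$ and $\R$ agree on syntactically stutter-tolerant formulas, which is itself a small induction or can be folded into the main one).

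Next I would treat the optimized $\nxt$ case. On the weak side, $\Roptm(\nxt\varphi) = \e \vee \nxt\Roptm(\varphi)$ with $\varphi$ stutter-tolerant. If $i < |\pi|-1$ then by fact~\ref{itm:run} $\pist,\map_i+1$ is not the sink, $\e$ is false there (more precisely $\e$ is false at every index $\le \map_{i+1}$), and it suffices to show $\pist,\map_i+1 \mtn \Roptm(\varphi) \Leftrightarrow \pist,\map_{i+1}\mtn\Roptm(\varphi)$; this is exactly stutter-tolerance applied to the block $\map_i < j \le \map_{i+1}$ (using that $\map_i+1$ lies in that block), after which the induction hypothesis on $\varphi$ closes the case. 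If $i = |\pi|-1$, then $\map_i+1$ is the first sink index, $\e$ holds there, so $\Roptm(\nxt\varphi)$ is true, matching the weak semantics of $\nxt$ at the end of the local trace. The strong side is symmetric: $\neg\e$ forces $\map_i+1$ to be a non-sink state, which exactly captures $i < |\pi|-1$, the condition for $\nxt$ to hold strongly.

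For the optimized $\unt$ case, $\Roptm(\varphi_1\unt\varphi_2) = \Roptm(\varphi_1)\unt(\y\e \vee \Roptm(\varphi_2))$ with both operands stutter-tolerant. The key point is that in the original rewriting $\Rm$ the left operand is $\neg\state\vee\Rm(\varphi_1)$ and the right is $(\state\wedge\Rm(\varphi_2))\vee\y\e$; stutter-tolerance of $\varphi_1$ lets us drop the $\neg\state\vee$ guard because on stuttering indices $\Rm(\varphi_1)$ already has the same (true-or-false) value it has at the surrounding $\map$-indices, and stutter-tolerance of $\varphi_2$ lets us drop the $\state\wedge$ guard on the right for the analogous reason — except one must be careful that dropping $\state\wedge$ could let the Until "stop early" on a stuttering index; but precisely because $\varphi_2$ is stutter-tolerant, stopping at such an index is equivalent to stopping at the next $\map$-index, so no behaviour is gained or lost. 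I would make this precise by the same chain-of-equivalences argument as in Lemma~\ref{thm:l1}, inserting the stutter-tolerance equivalence where Lemma~\ref{thm:l1} inserted facts~\ref{itm:state} and~\ref{itm:ye}, and then invoking the induction hypotheses on $\varphi_1,\varphi_2$; the $\y\e$ disjunct handles truncation exactly as before. The optimized $\atn$ case is handled the same way: $\Ropts(\term\atn\varphi)=\Ropts(\term)\atn(\Ropts(\varphi)\wedge\neg\e)$, where $\neg\e$ replaces the $\state$ guard, valid because $\varphi$ stutter-tolerant means the first future index satisfying $\Rp(\varphi)$ can be taken to be a $\map$-index, and $\neg\e$ rules out sink indices just as $\state$ did; the term-value equation then follows from the induction hypothesis on $\term$ together with fact~\ref{itm:state}.

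The main obstacle is not any single case but the bookkeeping needed to make the stutter-tolerance argument airtight: one must be careful that stutter-tolerance as stated in Definition~\ref{def:sttol} is a statement about $\Rs$, whereas in the induction we are manipulating $\Ropt$, so the cleanest route is to prove, as part of (or just before) the main induction, that for syntactically stutter-tolerant formulas/terms $\Ropt$ produces a formula semantically equivalent to $\R$ on the traces in $\St(\pi)$ — intuitively clear since the optimization only ever fires on sub-formulas that are already stutter-tolerant, but it needs to be said. The other delicate point is the "stopping early" subtlety in the $\unt$ case described above; I expect that to require the most careful phrasing, and I would lean on the inner induction on $j$ exactly as in Lemma~\ref{thm:syntst} to discharge it.
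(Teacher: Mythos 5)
Your proposal is correct and follows essentially the same route as the paper: reduce to Lemma~\ref{thm:l1}, treat only the optimized $\nxt$, $\unt$, $\atn$ clauses, and discharge them via stutter-tolerance of the sub-formulas (Lemma~\ref{thm:syntst}), with an auxiliary induction showing that on stuttering blocks $\Ropt$ of a syntactically stutter-tolerant sub-formula agrees with $\R$ at the surrounding $\map$-indices — exactly the side claim the paper folds into its proof. The only nitpick is in the end-of-trace $\nxt$ case: the disjunct $\e$ is evaluated at $\map_i$ (where it already holds when $i=|\pi|-1$), not at $\map_i+1$, but this does not affect the argument.
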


\begin{proof}
  From Lemma \ref{thm:l1}, we deduce that to prove the Lemma it suffices to prove that $\forall_{0 \le i < |\pi|} \pist, \map_i \mts
  \R(\varphi) \Leftrightarrow \pist, \map_i \mts \Ropt(\varphi)$ and $\pist(\map_i)(\Rs(u)) = \pist(\map_i)(\Ropts(u))$.
  To prove that, we also prove inductively that 
  if $\varphi$ is syntactically stutter-tolerant, then $\forall_{0 \le i < |\pi|} \forall_{\map_{i-1} < j < \map_i}
  \pist, j \mts \Ropts(\varphi) \Leftrightarrow \pist, \map_i \Rs(\varphi)$.
\vskip0.5\baselineskip 
\noindent 
We need to prove only the cases in which the sub-formulae are stutter tolerant w.r.t
$\R^*$.

\begin{itemize}
  \item $\nxt$: If $\e$ is true, then $i >= |\pi| - 1$. Therefore, with weak semantics, the formula shall be true while with strong semantics the formula shall be false.

    If $\e$ is false, then $\pist, \map_i \mts \Ropts(\nxt \varphi) \Leftrightarrow$
    $\pist, \map_i \mts \nxt \Ropts(\varphi) \Leftrightarrow \pist, \map_i + 1 \mts \Ropts(\varphi)$.
    By induction hypothesis $\dots \Leftrightarrow \pist, \map_i + 1 \mts \R(\varphi) \Leftrightarrow \pist, \map_{i+1} \mts \R(\varphi)$.
  \item $\unt$: 
    $\pist, \map_i \mtn \Roptm(\varphi_1) \unt (\y \e \vee \Roptm(\varphi_2)) \Leftrightarrow
     \exists k' \ge \map_i$ s.t. $\pist, k' \mtn \y \e$ or $\pist, k' \mtn 
     \Roptm(\varphi_2)$ and $\forall_{\map_i \le j' < k'} \pist, j' \mtn \Roptm(\varphi_1)$.
     By induction hypothesis $\exists \map_k \ge \map_i \text{ s.t. } \pist, \map_k \mtn
     \R(\varphi_2) \text{ and } \forall_{\map_i \le \map_j < \map_k} \pist, \map_j \mtn \R(\varphi_1)$.

     We now prove the additional part of the Lemma.
     If $\varphi_1$ or $\varphi_2$ are not syntactically stutter tolerant,
     then the rewriting is identical and thus, the lemma holds since $\varphi$ is
     stutter tolerant.

     If $\varphi_1$ and $\varphi_2$ are syntactic stutter tolerant,
     $\pist, j \mtn \Roptm(\varphi_1) \unt (\y \e \vee \Roptm(\varphi_2))
     \Leftrightarrow \exists k' \ge \map_i \text{ s.t. } \pist, k' \mtn \y \e \text{ or } \pist, k' \mtn 
     \Roptm(\varphi_2)$ and $\forall_{\map_i \le j' < k'} \pist, j' \mtn \Roptm(\varphi_1)$.
     We observe that $\pist, k' \mtn \y \e \Leftrightarrow k' > \map_{|\pi| - 1}$.
     Moreover, by induction hypothesis, each $j'$ and each $k'$ can be replaced with
     respectively $\map_j$ and $\map_k$
     s.t. $i \le k \le |\pi|$ and $i \le j < k$. Therefore, $\dots \Leftrightarrow
     \exists k \ge i \text{ s.t. } \pist, \map_k \mtn \R(\varphi_2)$ or $k = |\pi|$
     and $\forall i \le j < k: \pist, \map_j \mtn \R(\varphi_1) \Leftrightarrow \pi, i \mtn \varphi_1 \unt \varphi_2$.
   \item $\atn$: It follows from induction hypothesis and by Lemma \ref{thm:l1}\qedhere
\end{itemize}
\end{proof}
\begin{defi}
  We define ${\Ropt}^*$ as follows:
  $$
  {\Ropt}^*(\varphi) :=
  \begin{cases}
    \Roptm(\varphi) & \text{ If }\varphi \text{ is syntactically stutter-tolerant}\\
    \state \rel (\neg \state \vee \Roptm(\varphi)) & \text{ Otherwise}
  \end{cases}$$
\end{defi}

\begin{lem}
  \label{thm:l2o}
  For all $\pist \in \St(\pi):
  \pist, \map_0 \mt \Ropt(\varphi) \Leftrightarrow
  \pist, 0 \mt \Ropt^*(\varphi)$
\end{lem}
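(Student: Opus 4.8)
The plan is to follow the route of the proof of Lemma~\ref{thm:l2}, splitting on the case distinction that now appears in the definition of ${\Ropt}^*$. The workhorse in both cases is item~\ref{itm:weak} from the proof of Lemma~\ref{thm:l1}: for any subformula $\psi$ and any $j$ with $\map_{i-1} < j \le \map_i$ one has $\pist, \map_i \mtn \psi \Leftrightarrow \pist, j \mtn \state \rel (\neg \state \vee \psi)$. Since $\map_{-1} = -1 < 0 \le \map_0$, instantiating this at $i = 0$, $j = 0$ is always legitimate, so the index $0$ falls in the block $(\map_{-1}, \map_0]$ associated with the first local state.

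First I would treat the case where $\varphi$ is \emph{not} syntactically stutter-tolerant. Then ${\Ropt}^*(\varphi) = \state \rel (\neg \state \vee \Roptm(\varphi))$, and item~\ref{itm:weak} with $\psi := \Roptm(\varphi)$, $i = 0$, $j = 0$ gives immediately $\pist, \map_0 \mt \Roptm(\varphi) \Leftrightarrow \pist, 0 \mt {\Ropt}^*(\varphi)$, i.e. the claim. This sub-case is verbatim the argument of Lemma~\ref{thm:l2} with $\R$ replaced by $\Ropt$.

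If $\varphi$ \emph{is} syntactically stutter-tolerant, then ${\Ropt}^*(\varphi) = \Roptm(\varphi)$, so what has to be shown is $\pist, \map_0 \mt \Roptm(\varphi) \Leftrightarrow \pist, 0 \mt \Roptm(\varphi)$. When $\map_0 = 0$ there is nothing to prove. When $\map_0 > 0$, the index $0$ lies strictly between $\map_{-1}$ and $\map_0$, so I would reuse the auxiliary stutter-tolerance statement proved inside the proof of Lemma~\ref{thm:l1o} — namely, for syntactically stutter-tolerant $\varphi$, $\pist, j \mts \Ropts(\varphi) \Leftrightarrow \pist, \map_i \mts \Rs(\varphi)$ for every $j$ with $\map_{i-1} < j < \map_i$ — in its weak ($sgn = -$) instance at $i = 0$, $j = 0$. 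Combining this with the equivalence $\pist, \map_0 \mtn \Rm(\varphi) \Leftrightarrow \pist, \map_0 \mtn \Roptm(\varphi)$ at the indices $\map_i$ (also established within Lemma~\ref{thm:l1o}, and ultimately resting on Lemma~\ref{thm:syntst}) closes the equivalence between positions $0$ and $\map_0$.

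The step I expect to be delicate is this second case: one must confirm that the auxiliary stutter-tolerance claim of Lemma~\ref{thm:l1o} really covers the leftmost block $i = 0$, i.e. that $0$ counts as a stuttering position of $(\map_{-1}, \map_0)$ whenever $\map_0 > 0$. Since that claim is universally quantified over $0 \le i < |\pi|$ the instance $i = 0$ is included, so there is no gap — but it is worth spelling out, as this is precisely where syntactic stutter-tolerance (Lemma~\ref{thm:syntst}) is used to collapse the initial run $0, \dots, \map_0 - 1$ of stuttering steps. A minor side point is the degenerate case in which $\run$ never fires along $\pist$: then, under the convention fixing $\map_0$ in Definition~\ref{def:pr}, the local trace has a single state, $\map_0 = 0$, and both sides of the equivalence reduce to evaluation at the index $0$. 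Everything else is routine bookkeeping identical to Lemma~\ref{thm:l2}.
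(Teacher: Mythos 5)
Your proposal is correct and follows essentially the same route as the paper: the non-stutter-tolerant case is handled exactly as in Lemma~\ref{thm:l2}, and the stutter-tolerant case uses the fact that position $0$ lies in the stuttering block $(\map_{-1},\map_0]$ (since $\map_{-1}=-1$), so stutter-tolerance collapses the evaluation at $0$ to that at $\map_0$. Your extra care in routing through the auxiliary claim of Lemma~\ref{thm:l1o} and noting the degenerate case is just a more explicit spelling-out of what the paper's proof asserts directly.
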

\begin{proof}
  If $\varphi$ is not syntactically stutter tolerant, then the proof is the same
  one of lemma \ref{thm:l2}.
  If $\varphi$ is syntactically stutter tolerant, then $\forall_{\map_{-1} < j < \map_i}
  \pist, j \mtn \Roptm(\varphi) \Leftrightarrow \pist, \map_0 \mtn \Roptm(\varphi)$.
  Since $\map_{-1}$ is $-1$, then either $\map_0=0$ or $j$ gets the value $0$ in the
  for all; thus, proving the lemma.\qedhere
\end{proof}

\begin{thm}
  \label{thm:opt_thm}
  For all $\pist \in \St(\pi): \pi \mt \varphi \Leftrightarrow
  \pist \mt \Ropt^*(\varphi)$
\end{thm}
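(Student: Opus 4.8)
The plan is to mirror the proof of Theorem~\ref{thm:rewr_thm}, simply replacing the invocations of Lemma~\ref{thm:l1} and Lemma~\ref{thm:l2} by their optimized counterparts Lemma~\ref{thm:l1o} and Lemma~\ref{thm:l2o}, which have already done all the substantive work. Concretely, fix a trace $\pi$ of $\M_i$ and some $\pist \in \St(\pi)$. First I would apply Lemma~\ref{thm:l1o} at index $i = 0$: since $0 < |\pi|$ (traces are nonempty), it yields $\pi, 0 \mts \varphi \Leftrightarrow \pist, \map_0 \mts \Ropt(\varphi)$, and in particular $\pi \mtn \varphi \Leftrightarrow \pist, \map_0 \mtn \Ropt(\varphi)$ by the definition $\pi \mt \psi$ iff $\pi, 0 \mtn \psi$. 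Then I would apply Lemma~\ref{thm:l2o} to rewrite $\pist, \map_0 \mtn \Ropt(\varphi)$ as $\pist, 0 \mtn \Ropt^*(\varphi)$, i.e.\ $\pist \mt \Ropt^*(\varphi)$. Chaining the two equivalences gives $\pi \mt \varphi \Leftrightarrow \pist \mt \Ropt^*(\varphi)$, which is exactly the statement.

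A subtle point worth spelling out is that $\Ropt^*$ is defined by cases on whether $\varphi$ is syntactically stutter-tolerant, and Lemma~\ref{thm:l2o} already handles both cases: in the non-stutter-tolerant case $\Ropt^*(\varphi) = \state \rel (\neg\state \vee \Roptm(\varphi))$ and the argument is literally that of Lemma~\ref{thm:l2} (using observation~\ref{itm:weak} from the proof of Lemma~\ref{thm:l1}); in the stutter-tolerant case the extra hypothesis from Lemma~\ref{thm:l1o} that $\Roptm(\varphi)$ takes the same value at all $j$ with $\map_{-1} < j < \map_0$ as at $\map_0$ is what lets us drop the $\state \rel (\cdots)$ guard, since $\map_{-1} = -1$ forces either $\map_0 = 0$ or $0$ to be one of those intermediate indices. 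So no new reasoning is needed here; I would simply cite Lemma~\ref{thm:l2o}.

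There is really no hard part: the theorem is a two-line corollary of the two preceding lemmas, exactly as Theorem~\ref{thm:rewr_thm} was a corollary of Lemmas~\ref{thm:l1} and~\ref{thm:l2}. If anything, the only thing to be careful about is making sure the index bookkeeping lines up --- that the weak-semantics top-level satisfaction $\mt$ is indeed $\mtn$ at position $0$ on both sides, and that Lemma~\ref{thm:l1o} is applied at the specific index $i=0$ rather than universally --- but these are bookkeeping, not obstacles. Accordingly the proof I would write is: ``By Lemma~\ref{thm:l1o} with $i = 0$, $\pi \mtn \varphi \Leftrightarrow \pist, \map_0 \mtn \Ropt(\varphi)$. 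By Lemma~\ref{thm:l2o}, $\pist, \map_0 \mtn \Ropt(\varphi) \Leftrightarrow \pist \mtn \Ropt^*(\varphi)$. Combining the two yields $\pi \mt \varphi \Leftrightarrow \pist \mt \Ropt^*(\varphi)$.''
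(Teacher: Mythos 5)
Your proposal is correct and is exactly the paper's argument: the paper proves Theorem~\ref{thm:opt_thm} by repeating the proof of Theorem~\ref{thm:rewr_thm} with Lemma~\ref{thm:l1} and Lemma~\ref{thm:l2} replaced by Lemma~\ref{thm:l1o} and Lemma~\ref{thm:l2o}. Your additional remarks on the case split in $\Ropt^*$ and the index bookkeeping are consistent with how those lemmas are stated and add nothing that conflicts with the paper.
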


\begin{proof}
  The proof is identical to the proof of Theorem \ref{thm:rewr_thm} using
  Lemma \ref{thm:l1o} and Lemma \ref{thm:l2o}.\qedhere
\end{proof}
\noindent 
Theorem ~\ref{thm:rewr_thm} and Theorem \ref{thm:opt_thm} show that respectively
$\R^*$ and $\Ropt^*$ are able to translate a local LTL
property into a global property without changing its semantics in terms of traces.
Therefore, we can use the two rewritings to prove Inference \ref{ifr:comp}.
\begin{defi}
  \label{def:gamma_p}
  Let $\M_1, \dots, \M_n$ be $n$ ITS and $\Prop_1, \dots, \Prop_n$ be LTL formulas
  on the language of each $\M_i$. We define $\gamma_P$ as follows
  $$\gamma_P(\Prop_1, \dots, \Prop_n) :=
  \Ropt^*_{\M_1}(\Prop_1) \wedge \dots \wedge \Ropt^*_{\M_n}(\Prop_n) \wedge \fcond$$
  where $\fcond := \fcond^{\M_1} \wedge \dots \wedge
  \fcond^{\M_n}$
\end{defi}
\begin{cor}
  \label{crl:ifrok}
  Using $\gamma_P$ from Definition \ref{def:gamma_p}, $\gamma_S$ from Section \ref{sec:mainp},
  for all compatible ITS $\M_1, \dots, \M_n$, for all local properties $\Prop_1, \dots, \Prop_n$
  over the language of respectively $\M_1, \dots, \M_n$, for all global properties $\Prop$:
   Inference \ref{ifr:comp} holds.
\end{cor}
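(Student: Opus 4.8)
The plan is to unfold Inference~\ref{ifr:comp} into its two constituent rules and discharge each one separately. The outer rule — concluding $\gamma_S(\M_1,\dots,\M_n) \models \Prop$ from the two facts $\gamma_S(\M_1,\dots,\M_n) \models \gamma_P(\Prop_1,\dots,\Prop_n)$ and $\gamma_P(\Prop_1,\dots,\Prop_n) \models \Prop$ — is just transitivity of trace containment: every trace of $\M_1 \otimes \dots \otimes \M_n$ satisfies $\gamma_P(\Prop_1,\dots,\Prop_n)$, hence by the second fact also satisfies $\Prop$. So all the content lies in the inner rule: from $\M_i \models \Prop_i$ for every $i$, derive $\gamma_S(\M_1,\dots,\M_n) = \M_1 \otimes \dots \otimes \M_n \models \gamma_P(\Prop_1,\dots,\Prop_n)$.

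For the inner rule I would fix an arbitrary trace $\pi \in \La(\M_1\otimes\dots\otimes\M_n)$, finite or infinite, and prove $\pi \mt \gamma_P(\Prop_1,\dots,\Prop_n)$, i.e. $\pi \mt \fcond$ and $\pi \mt \Ropt^*_{\M_i}(\Prop_i)$ for each $i$. The conjunct $\fcond = \fcond^{\M_1}\wedge\dots\wedge\fcond^{\M_n}$ is immediate: by Definition~\ref{def:symbasynccomp} the transition relation of the composition contains $\fcond^{\M_i}$ at every step — the constraint that the outputs of $\M_i$ are frozen whenever $\run_i$ is false — so any trace of the composition satisfies each $\fcond^{\M_i}$. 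For the conjunct $\Ropt^*_{\M_i}(\Prop_i)$, put $\pi_i := \pr_{\M_i}(\pi)$. By Theorem~\ref{thm:proj}, $\pi_i \in \La(\M_i)$, so the hypothesis $\M_i \models \Prop_i$ gives $\pi_i \mt \Prop_i$. Moreover $\pi \in \St_{\M_i}(\pi_i)$ by the definition of $\St_{\M_i}$, which collects exactly the global traces projecting onto $\pi_i$. Applying Theorem~\ref{thm:opt_thm} to $\pi_i$ and $\pi$ in the forward direction, $\pi_i \mt \Prop_i \Rightarrow \pi \mt \Ropt^*_{\M_i}(\Prop_i)$, closes the case. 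Conjoining over $i$ together with $\fcond$ yields $\pi \mt \gamma_P(\Prop_1,\dots,\Prop_n)$; as $\pi$ was arbitrary, $\gamma_S(\M_1,\dots,\M_n) \models \gamma_P(\Prop_1,\dots,\Prop_n)$, and the outer rule then finishes the proof.

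The main obstacle — really the only place where care is needed — is keeping the three layers (each local ITS $\M_i$, the composition $\M_1\otimes\dots\otimes\M_n$, and the rewriting) in sync: one must invoke Theorem~\ref{thm:proj} to know that $\pr_{\M_i}(\pi)$ is a genuine local trace (its finite/infinite case split and the fairness argument there do the real work), observe that $\pi$ then lies in $\St_{\M_i}(\pr_{\M_i}(\pi))$ so that Theorem~\ref{thm:opt_thm} is literally applicable, and make sure the notion of $\M \models \Prop$ being used quantifies over all traces — finite and infinite — under the weak semantics $\mt$, so that the two premises and the conclusion all speak of the same class of traces. Everything else is a direct substitution into Theorems~\ref{thm:proj} and~\ref{thm:opt_thm} and the definitions of $\gamma_S$, $\gamma_P$ and $\fcond$; in particular, Theorem~\ref{thm:rewr_thm} for $\R^*$ could be used in place of Theorem~\ref{thm:opt_thm} with no change, since $\gamma_P$ differs from that version only by using the optimized rewriting $\Ropt^*$.
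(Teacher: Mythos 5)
Your proof is correct and follows essentially the same route the paper intends (the paper leaves the corollary without an explicit proof, citing Theorems~\ref{thm:proj} and~\ref{thm:opt_thm}): discharge the inner premise by projecting each global trace via $\pr_{\M_i}$, invoking Theorem~\ref{thm:proj} to get a genuine local trace, noting $\pi \in \St_{\M_i}(\pr_{\M_i}(\pi))$ so Theorem~\ref{thm:opt_thm} applies, and obtaining $\fcond$ directly from the composed transition relation, with the outer rule being transitivity of entailment. Your closing remarks on keeping the trace classes (finite/infinite, weak semantics) consistent are exactly the care the paper's argument presupposes.
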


\begin{example}
  Consider the formula $\varphi_{c2} := \always (rec_2 \rightarrow out_2' = in_2 \wedge \nxt send_2)$ from Figure \ref{fig:ex}.
  The rewriting ${\Ropt}(\varphi_{c2})$ is defined as the following formula.
  $$\always (\neg \state \vee (\underbrace{\run \wedge rec_2}_{\Roptp(rec_2)} \rightarrow
  \underbrace{(\e \vee \overbrace{out_2 \atn \neg \e}^{\Ropt(out_2 \atn \top)}=in_2)}_{\Ropt(out_2'=in_2)} \wedge
  \underbrace{\e \vee \nxt send_2}_{\Roptm(\nxt send_2)}))$$
  Finally, ${\Ropt}^*(\varphi_{c2})$ is defined as $\Ropt(\varphi_{c2})$.

  The simplification optimizes the formula in various parts. $\nxt send_2$ is rewritten
  into a simpler formula, in which we only need to check whether we are at the end
  of the local trace; the same occurs for primed output variable ($next$).
  On the contrary, $\always$ must be rewritten as for $\R$ because its sub-formula
  is not stutter-tolerant. On the other hand, the top-level rewriting is simplified
  because $\always$ is syntactically stutter tolerant.
\end{example}
\subsection{Rewriting under fairness assumption}
\label{sec:trrufa}
This section defines a variation of the previous rewriting that assumes infinite
execution of local components. The rewriting is presented as a variation of
the optimized rewriting; it is meant to exploit the fairness assumption to be more concise
and efficient. The general idea is that since the local components run infinitely
often, we can consider the semantics of LTL with event-freezing functions instead
of the finite semantics considered up to now.

\begin{defi}
  \label{def:optrewrfair}
  We define $\Td$ as follows:
  \begin{align*}
    &\Td(v) := v \text{ for } v \in \V\\
    &\Td(\pred(\term_1,\dots, \term_n)) := \pred(\Td(\term_1), \dots,\Td(\term_n))\\
    &\Td(\varphi \vee \psi) := \Td(\varphi) \vee \Td(\psi)\\
    &\Td(\neg \varphi) = \neg \Td(\varphi)\\
    &\Td(X \psi) :=
        \begin{cases}
          \nxt (\Td(\psi)) & \text{if } \psi \text{ is synt.st.tol.}\\
          \nxt (\run \rel (\neg \run \vee \Td(\psi))) & \text{otherwise}
        \end{cases}\\
     &\Td(\varphi_1 \unt \varphi_2) :=
      \begin{cases}
        \Td(\varphi_1) \unt \Td(\varphi_2) & \text{if } \varphi_1 \text{ and } \varphi_2 \text{ are synt.st.tol.}\\
        (\neg \run \vee \Td(\varphi_1)) \unt (\run \wedge \Td(\varphi_2)) & \text{otherwise}
      \end{cases}\\
     &\Td(\y \varphi) := \y (\neg\run \since (\run \wedge \Td(\varphi)))\\
     &\Td(\varphi_1 \since \varphi_2) :=
      \begin{cases}
        \Td(\varphi_1) \since \Td(\varphi_2) & \text{if } \varphi_1 \text{ and } \varphi_2 \text{ are synt.st.tol.}\\
        (\neg \run \vee \Td(\varphi_1)) \since (\run \wedge \Td(\varphi_2)) & \text{otherwise}
      \end{cases}\\
     &\Td(ite(\psi, \term_1, \term_2)) := ite(\Td(\psi), \Td(\term_1), \Td(\term_2))\\
     &\Td(\term \atn \varphi) :=
      \begin{cases}
        \Td(\term) \atn \Td(\varphi) & \text{if } \psi \text{ is synt.st.tol.}\\
        \Td(\term) \atn (\run \wedge \Td(\varphi)) & \text{otherwise}
      \end{cases}\\
     &\Td(\term \atl \varphi) :=\Td(\term) \atl (\run \wedge \Td(\varphi))
  \end{align*}

  Moreover, we define ${\Td}^*$ as $$ {\Td}^*(\varphi) := \begin{cases}
   \Td(\varphi)& \text{ If } \varphi \text{ is syntactically stutter-tolerant}\\
   \run \rel (\neg \run \vee \Td(\varphi))& \text{ Otherwise}
  \end{cases}$$
\end{defi}
\noindent 
${\Td}^*$ simplifies $\Ropt^*$ in various ways. It does not need to distinguish between
input and output variables, it does not distinguish between weak/strong semantics
and it does not distinguish between $state$ and $\run$. Intuitively, these distinctions
make sense only with finite semantics; since the rewriting assumes infinite local
executions, these technicalities become superfluous.

\begin{thm}
  Let $\pi$ be an \textit{infinite} trace, for all $\pist \in \St(\pi):
  \pi \models \varphi \Leftrightarrow \pist \models {\Td}^*(\varphi)$
\end{thm}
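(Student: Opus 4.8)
The plan is to reduce this theorem to the already-established results for the general rewriting $\Ropt^*$. Since $\pi$ is assumed infinite, the key observation is that the weak/strong truncated semantics $\mts$ collapses to standard LTL semantics $\models_{LTL}$ on $\pi$ (as stated in Section~\ref{sec:ltlsem} and used already in the proof of the SafetyLTL theorem). Moreover, for an infinite trace $\pi$, the inverse-projection set $\St(\pi)$ consists of global traces $\pist$ in which $\run$ occurs infinitely often, so the prophecy variable $\e_i$ is always false along $\pist$; consequently $\z\run \wedge \e$ is false everywhere and $\state = \run \vee (\z\run\wedge\e)$ simplifies to $\state \equiv \run$. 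These two simplifications are exactly the technical differences between $\Td$ and $\Ropt$: the former drops the $\e$-disjuncts/conjuncts in the $\nxt$, $\unt$, $\since$, $\atn$ cases, drops the distinction between $\pred^I$ and $\pred^O$ (harmless on infinite traces where $\run$-filtered predicates behave uniformly), and replaces $\state$ by $\run$ in the release guard of $\Td^*$.

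Concretely, I would first prove an auxiliary lemma mirroring Lemma~\ref{thm:l1o}: for every infinite $\pi$, every $\pist \in \St(\pi)$, and every index $i$,
$$\pi, i \models \varphi \Leftrightarrow \pist, \map_i \models \Td(\varphi), \qquad \pi(i)(u) = \pist(\map_i)(\Td(u)).$$
This is an induction on the structure of $\varphi$ (and $u$), and in each case I would show that the $\Td$-clause agrees with the $\Ropt$-clause once we use $\e \equiv \bot$ along $\pist$ and $\state \equiv \run$: e.g.\ $\Td(\nxt\psi) = \nxt\Td(\psi)$ when $\psi$ is stutter-tolerant matches $\Roptm(\nxt\psi) = \e \vee \nxt\Roptm(\psi)$ since $\e$ is false, and in the non-stutter-tolerant case $\nxt(\run \rel(\neg\run \vee \Td(\psi)))$ matches the $\R$-clause with $\state$ replaced by $\run$; for $\unt$, the disjunct $\y\e$ on the right of $\Roptm$ never fires, so $\Roptm(\varphi_1)\unt(\y\e \vee \Roptm(\varphi_2))$ reduces to $\Td(\varphi_1)\unt\Td(\varphi_2)$ under the stutter-tolerant branch, and the non-stutter-tolerant branch is the literal $\R$-clause with $\state\mapsto\run$. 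The predicate base case needs the remark that on an infinite trace no $\map_i$ is the "last" index, so both $\pred^I$ and $\pred^O$ evaluate by the plain structure interpretation, matching $\Td(\pred) = \pred(\Td(\bar u))$ without the $\neg\run$ guard. Then, exactly as in Lemmas~\ref{thm:l2o} and the proof of Theorem~\ref{thm:opt_thm}, I would lift the indexed equivalence to the top level by handling the $\map_0 \neq 0$ issue with the release wrapper in $\Td^*$ (again using $\state\equiv\run$), obtaining $\pi \models \varphi \Leftrightarrow \pist \models \Td^*(\varphi)$.

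The main obstacle I anticipate is justifying carefully that $\e_i$ is genuinely false everywhere along every $\pist \in \St(\pi)$ when $\pi$ is infinite. This requires recalling that $\e_i$ is a prophecy variable constrained by $\e_i \leftrightarrow \e_i' \wedge \neg\run_i$ together with the strong-fairness constraint $\langle\top,\run_i\vee\e_i\rangle$ in Definition~\ref{def:symbasynccomp}: an infinite projected trace $\pr_{\M_i}(\pist) = \pi$ being infinite forces $\pist \models \always\future\run_i$ (first branch of Definition~\ref{def:pr}), hence $\always\neg\run_i$ never holds from any point, hence $\e_i$ (which is semantically $\always\neg\run_i$) is identically false. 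Once this is pinned down, every simplification in $\Td$ relative to $\Ropt$ is a syntactic instance of a valid equivalence, and the induction is routine; a secondary but lighter point to check is that dropping the input/output predicate distinction is sound, which again follows from infiniteness of $\pi$ ensuring each $\map_i$ is a proper (non-terminal) transition point so that the $\Tail$-style weak/strong split is vacuous.
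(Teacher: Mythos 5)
Your proposal is correct and follows essentially the same route as the paper's (sketch) proof: observe that for infinite $\pi$ every $\pist \in \St(\pi)$ satisfies $\always\future\run$, hence $\e$ is identically false and $\state \equiv \run$, so substituting $\e \mapsto \bot$ and $\state \mapsto \run$ turns $\Ropt^*$ into $\Td^*$ and the result reduces to Theorem~\ref{thm:opt_thm}. Your version merely spells out details the paper leaves implicit (the justification that $\e$ is false from the composition constraints, the vacuity of the input/output predicate split, and the Lemma~\ref{thm:l1o}-style induction), which is a faithful elaboration rather than a different argument.
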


\begin{proof}{(Sketch)}
  Since $\pi$ is infinite, $\e$ is always false, $state \Leftrightarrow \run$.
  We can substitute $\e$ with $\bot$ and $state$ with $\run$ in $\Ropt$ and we
  obtain this rewriting.\qedhere
\end{proof}

\begin{example}
  Consider the formula $\varphi_{c2} := \always (rec_2 \rightarrow out_2' = in_2 \wedge \nxt send_2)$ from Figure \ref{fig:ex}.
  The rewriting ${\Td}(\varphi_{c2})$ is defined as the following formula.
  $$\always (\neg \run \vee (\underbrace{rec_2}_{\Td(rec_2)} \rightarrow
  \underbrace{(\overbrace{out_2'}^{\Td(out_2')}=in_2)}_{\Td(out_2'=in_2)} \wedge
  \underbrace{\nxt send_2}_{\Td(\nxt send_2)}))$$
  Finally, ${\Td}^*(\varphi_{c2})$ is defined as $\Td(\varphi_{c2})$.

By assuming infinite executions of local traces it is possible to drastically simplify
the rewriting. Next outputs are left unchanged because they are stutter tolerant
and with infinite execution $\e$ is not needed. Input variables are also transparent
to the rewriting because the infinite semantics do not distinguish between input
and output variables.
As before, $\always$ must be rewritten because its sub-formula
is not stutter-tolerant; however, in this case, the rewriting can use $\run$ instead
of $state$ because the two expressions are equivalent with infinite semantics.
Finally, as before, the top-level rewriting is simplified because $\always$ is syntactically stutter tolerant.
\end{example}

\section{Experimental evaluation }
\label{sec:perfeval}
We implemented the compositional techniques proposed in this paper inside the
contract-based design tool OCRA~\cite{ocra}. Our extension covers the contract 
refinement check, in which a contract (defined as a couple $\langle A, G \rangle$
of LTL formulae) is considered correct if its sub-component contracts
refine it. For simplicity, we consider contracts without assumptions i.e. in which
the assumption is $\top$; in this scenario, the refinement of contracts is given
by the compositional reasoning described in this paper. For a detailed description
of the assume-guarantee contract proof system employed in the tool refer to \cite{CT15}.

The objective of this experimental evaluation is to do both a quantitative and
a qualitative evaluation of our compositional approach differentiating possibly 
finite and infinite semantics.
Qualitatively, we compare the verification results to assess how the finiteness
impacts on the result, observing also the required scheduling assumptions for possibly
finite systems.
Quantitatively, we analyse the overhead of reasoning over a mix of finite and
infinite executions to assess whether or not our rewriting scales on real models.
Due to the absence of equally expressible formalism to define asynchronous composition,
our comparison with related work is limited to another rewriting
suited only for local infinite systems with event-based asynchronous composition\cite{oldrewr}.

Although we defined compositional reasoning with truncated semantics, in our
experiments we reason about global infinite executions and possibly finite local
executions.
The experiments~\cite{zenodoexp} were run in parallel on a cluster with nodes
with Intel Xeon CPU 6226R running at 2.9GHz with 32CPU, 12GB. The timeout for
each run was two hours and the memory cap was set to 2GB.

In this section, we denote the composition based on $\Ropt^*$ as \trr{}(Truncated Rewriting); we denote
the composition based on $\Ropt^*$ with additional constraints to ensure infinite
local executions as \trrf{} (Truncated Rewriting + Fairness); we denote the
composition based on ${\Td}^*$ assuming local infinite execution as \trrufa{} 
(Truncated Rewriting under Fairness Assumption).

\begin{figure}[t]
  \begin{subfigure}{.49\textwidth}
    \includegraphics[width=\textwidth]{"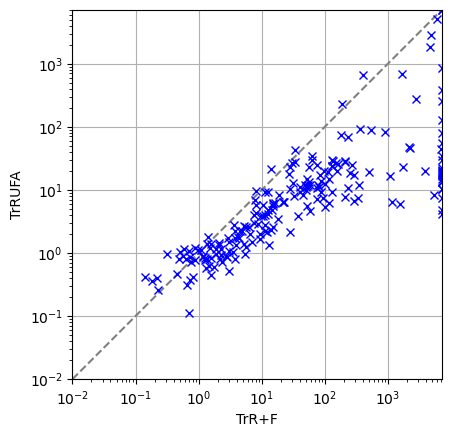"}
    \caption{Comparison between \trrf{} and \trrufa{} over valid instances.}
    \label{fig:compfair}
  \end{subfigure}
  \begin{subfigure}{.49\textwidth}
    \includegraphics[width=\textwidth]{"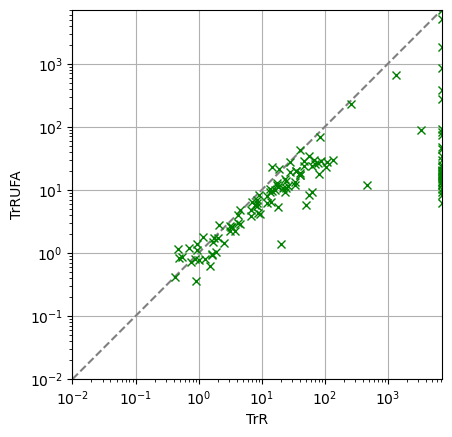"}
    \caption{Comparison between \trr{} and \trrufa{} over valid instances.}
  \end{subfigure}
  \begin{subfigure}{.49\textwidth}
    \includegraphics[width=\textwidth]{"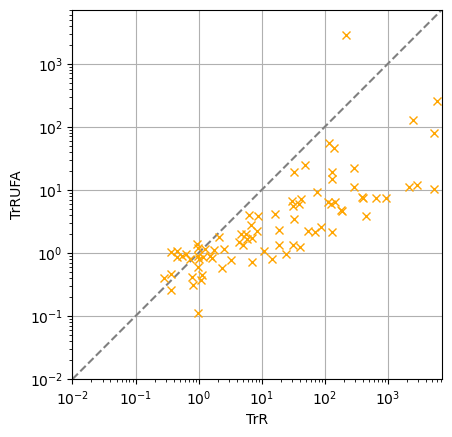"}
    \caption{Comparison between \trr{} and \trrufa{} over "different"
    results.}
  \end{subfigure}
  \begin{subfigure}{.49\textwidth}
    \includegraphics[width=\textwidth]{"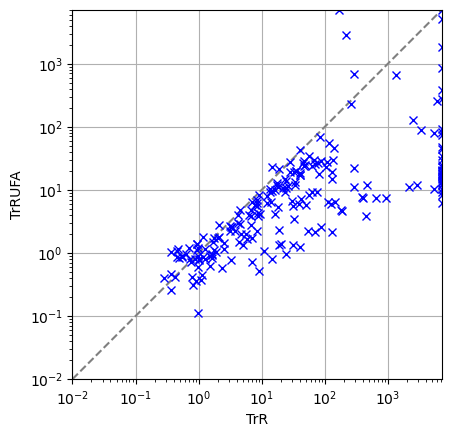"}
    \caption{Comparison between \trr{} and \trrufa{} over all the instances.}
  \end{subfigure}
  \caption{Scatter plots comparing \trr{},\trrf{} and \trrufa}
  \label{fig:comp1}
\end{figure}

\begin{figure}
  \includegraphics[width=.49\textwidth]{"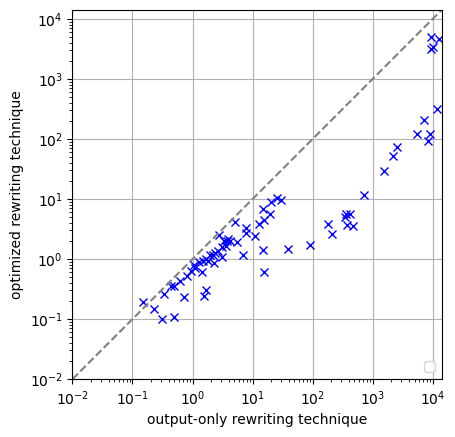"}
  \caption{Comparison between \trrufa{} and event-based rewriting
  of \cite{oldrewr}}
  \label{fig:old_comp}
\end{figure}

\begin{table}
\csvautobooktabular{table.csv}
  \caption{Subset of quantitative and qualitative results of the experimental evaluation.}
  \label{tbl:res_t}
\end{table}
\subsection{Benchmarks}
We have considered benchmarks of various kinds:
\begin{enumerate}
  \item Simple asynchronous models from OCRA comprehending the example depicted
    in Figure \ref{fig:ex}.
  \item Pattern formula compositions from \cite{nfm22} experimental evaluation
  \item Contracts from the experimental evaluation of \cite{CimattiTACAS23} on
    AUTOSAR models adapted for truncated semantics.
\end{enumerate}

\subsubsection{Simple asynchronous models}
We considered two variations of the example model depicted in Figure \ref{fig:ex}:
one version considers all three components with both the possible outcome
of failure and success. The second version is composed only of components $c1$ and $c2$ and the global property states that eventually the message is sent.

The second model we are considering is a simple representation of a Wheel Brake
System with two Braking System Control Units (BSCU) connected to two input braking
pedals, a Selection Switch and an actual hydraulic component that performs the actual brake. The top-level property
states that if one of the two pedals is pressed eventually the hydraulic component
will brake.

\subsubsection{Pattern models}

We took from \cite{nfm22} some benchmarks based on Dwyer LTL patterns \cite{pattern}.
The considered LTL patterns are the following: \textit{response}, \textit{precedence chain} and \textit{universality} patterns.
The models compose the pattern formulas in two ways: as a sequence of $n$ components
linked in a bus and as a set of components that tries to write on the output port.
These patterns are parametrized on the number of components involved in the composition.

\subsubsection{EVA AUTOSAR contracts}
We took the experimental evaluation from the tool EVA\cite{CimattiTACAS23} and
we adapted it for our compositional reasoning. The models are composed of various
components for which the scheduling is of two types: event-based and cyclic.
Event-based scheduling forces the execution of a component when some of its input
variables change while cyclic scheduling forces the execution of the component every $n$ time units.

In addition to these models,
we also considered a variation of a subset of instances (described in Section~\ref{sec:autosar_ex})
and we relaxed the
scheduling constraints allowing some components to become unresponsive at some point to the original
constraints. We forced an event-based scheduled component (Brake Actuator) to be 
scheduled when its input changes but only until it ends; moreover, we forced
a cyclic scheduled component (Brake Watchdog) to run every $n$ times units until it ends.
Finally, assuming that one of the two components runs infinitely often (i.e. does
not become unresponsive), we check if the composition still holds.

\subsection{Experimental evaluation results}
Figure \ref{fig:comp1} provide an overall comparison between \trr{}, \trrf{} and \trrufa{}
in terms of results and execution time. In these plots, the colour determines the
validity results: blue aggregates all the results without distinction between valid
and invalid; green considers valid instances of both techniques (if the other algorithm returns valid,
timeout is optimistically believed to be true); yellow consider instances in which
the two techniques had different results.

Overall, we checked 624 instances (208 for each rewriting); 406 of these instances
were proven valid, 88 were proven invalid and 60 instances timed out. The general
statistics can be found in Table~\ref{tbl:summary}.
\begin{table}
\csvautobooktabular{summary.csv}
  \caption{Summary result for algorithms}
  \label{tbl:summary}
\end{table}

Table \ref{tbl:res_t} shows some relevant instances of the experimental evaluation
highlighting their execution time and their validity results.

\paragraph{Qualitative results}
As we expected, there are differences in validity results between \trr{} and \trrufa{}.
The pattern models cannot guarantee the validity of the composition with the
weak semantics without additional constraints over the composition.
Intuitively, it suffices that a single component is not scheduled
to violate bounded response properties.

Similarly, we compared the validity results of the 3 rewritings on a simplified version
of the example of Section \ref{sec:simple_ex}; this variation of the model contains
only components $c_1$ and $c_2$.
The simplified version was proved valid using \trrufa{} and \trrf{} while a counterexample
was found using \trr. The simplified version is found invalid when $c_1$ is scheduled
only a finite amount of times failing to deliver the message to component $c_2$.
On the other hand, when we correct the model by adding the component $c_3$, the composition
is proved valid in all the 3 cases.

For what regards the EVA benchmarks, \trr{} and \trrufa{} gave the same validity
results. That occurred because the scheduling constraints were quite strong; 
in particular, the cyclic constraints implicitly forced components, 
such as the watchdog, to run infinitely often. The event-based scheduled component
could have a finite execution but only if their input did not change from some point
on. Differently, updating the constraints by removing these implicit infinite executions
makes the property invalid using \trr{}. To fix these invalid results, we updated the
system assuming that at least one of the two components runs infinitely often and
we relaxed the global property increasing the timed bound for the brake to occur
from 2 time units to 3. Finally, the corrected model was proved in all the 3 cases.

\paragraph{Quantitative results}
We provided a comparison between $\Ropt$ and $\Td$ in term of impact of the
transformation on the verification time.
Since $\Td$ assumes that each local components is executed infinitely often,
we used the rewriting $\Ropt$ equipped with fairness assumption (\trrf{}).
Figure \ref{fig:compfair} shows the comparison between \trrf{} and
\trrufa{}. In this case, \trrufa{} is more
efficient than \trrf. This is not surprising because the generated formula 
assuming infinite local execution can be significantly smaller than the general
one. In general, we see that, regardless of the result, \trrufa{} can prove
the property faster. We think that having to check all the possible combinations
of finite/infinite local execution provides a significant overhead in the verification.

In figure \ref{fig:old_comp}, we compared \trrufa{} with another rewriting for
the composition of temporal properties from \cite{oldrewr}. The comparison has
been done over a subset of the \textit{pattern} models.
The rewriting of \cite{oldrewr} is in principle similar to \trrufa{}; it assumes
infinite executions of local component but, contrary to our approaches, 
in \cite{oldrewr} the asynchronous composition is supported only
considering shared synchronization events\footnote{These events are basically Boolean
variables shared between two components. When one of these variable becomes true, both the components run. Although it is not detailed in the paper, we do support these events
natively by adding additional scheduling constraints in $\alpha$}.
Due to the expressive limitations of the other approach, we applied the evaluation
over a smaller set of instances. It should be noted that the technique of 
\cite{oldrewr} already introduces fairness assumptions of the infinite execution
of local components; therefore, we don't need to manipulate/complicate that rewriting
in this evaluation.
It is clear from the figure that \trrufa{} outperforms this other rewriting.

\section{Conclusions}
\label{sec:conclusions}

In this paper, we considered the problem of compositional reasoning
for asynchronous systems with LTL properties over input and output
variables in which local components are not assumed to run infinitely often.
We introduced a semantics based on the truncated semantics of Eisner and Fisman
for LTL to reason over finite executions of local components.
We proposed a new rewriting of LTL formulas that allows for
checking compositional rules with temporal satisfiability solvers. We then
provided an optimized version of such rewriting.

In the future, we will consider various directions for extending the
framework including real-time and hybrid specifications, optimizations
based on the scheduling of components and other communication
mechanisms such as buffered communication. Moreover, we will generalise our compositional
reasoning for assume/guarantee contracts based on \cite{CT15}, for which we hypothesise
that assumptions must be treated with strong semantics.

\bibliography{biblio} 
\bibliographystyle{alphaurl}
\end{document}